\documentclass[11pt]{article}

\usepackage{enumitem}
\usepackage{amsfonts}
\usepackage{amsmath}
\usepackage{amsthm}
\usepackage[dvipsnames]{xcolor}
\usepackage[a4paper, margin=1in]{geometry}
\usepackage{booktabs} 
\usepackage[ruled]{algorithm2e} 

\usepackage{natbib}
\usepackage{authblk}

\usepackage{hyperref}

\newcommand{\bigparen}[1]{\left( #1 \right)}
\newcommand{\bigbraces}[1]{\left\{ #1 \right\}}

\newcommand{\fdiff}[2]{\frac{\mathrm{d} #1}{\mathrm{d} #2}}
\newcommand{\realnum}{\mathbb{R}}
\newcommand{\argmin}{\operatornamewithlimits{arg\,min}}
\newcommand{\argmax}{\operatornamewithlimits{arg\,max}}

\setlist[itemize]{noitemsep, topsep=0pt}

\newtheorem{theorem}{Theorem}
\newtheorem{lemma}{Lemma}
\newtheorem{corollary}{Corollary}[lemma]

\theoremstyle{definition}
\newtheorem{definition}{Definition}
\theoremstyle{definition}
\newtheorem{example}{Example}

\title{Proportional Dynamics in Linear Fisher Markets with Auto-bidding: Convergence, Incentives and Fairness}

\date{}

\author{Juncheng Li}
\author{Pingzhong Tang}
\affil{Tsinghua University}

\allowdisplaybreaks

\begin{document}
	
\maketitle
	
	\begin{abstract}
		Proportional dynamics, originated from peer-to-peer file sharing systems, models a decentralized price-learning process in Fisher markets.
		Previously, items in the dynamics operate independently of one another, and each is assumed to belong to a different seller.
		In this paper,
		we show how it can be generalized to the setting where each seller brings \textit{multiple} items and buyers allocate budgets at the granularity of \textit{sellers} rather than individual items.
		The generalized dynamics consistently converges to the competitive equilibrium, and interestingly relates to the auto-bidding paradigm currently popular in online advertising auction markets.
		In contrast to peer-to-peer networks, the proportional rule is not imposed as a protocol in auto-bidding markets.
		Regarding this incentive concern, we show that buyers have a strong tendency to follow the rule, but it is easy for sellers to profitably deviate (given buyers' commitment to the rule).
		Based on this observation, we further study the seller-side deviation game and show that it admits a unique pure Nash equilibrium.
		Though it is generally different from the competitive equilibrium, we show that it attains a good fairness guarantee as long as the market is competitive enough and not severely monopolized.
	\end{abstract}

\section{Introduction}

Market dynamics describes the interplay among market members in response to various signals.
For a solution concept of a model, besides its economic properties, if there exists some natural and computationally efficient dynamics that leads to it, it is much more likely to emerge in real-world markets.
The modern theory of general equilibrium is pioneered by \citet{walras1900elements} with the well-known dynamics named \textit{t\^atonnement process}.
It captures the intuitive phenomenon where the price increases if the demand is greater than supply, and lowers otherwise.
However, t\^atonnement does not specify a rule to determine intermediate off-equilibrium outcomes \citep{branzei2021proportional}.
Buyers interact with items by reporting their demands given the current pricing, and it is when at least an approximate equilibrium is reached that trade really happens.
The requirement of a centralized coordinator who reports prices and collects demands is also not always realistic \citep{cole2008fast,dvijotham2022convergence}.
Moreover, the discrete version of the process may not even converge to the competitive equilibrium in the foundational linear Fisher market \citep{goktas2023t}, which models the classic scenario where a set of items is to be sold to a set of budget-constrained buyers with linear utilities.

Besides t\^atonnement process, another dynamics that received a lot of attention in the last decade is the \textit{proportional dynamics} (or proportional response dynamics).
In this process, at each round, buyers are required to bid for every item and pay their bids in advance.
After all the bids are collected, each item is allocated to buyers \textit{in proportion to their bids}.
Conversely, buyers' bids are generated by allocating budgets among items \textit{in proportion to the utility} they receive from each item in the last round.
In contrast to t\^atonnement process, proportional dynamics operates in a fully decentralized way, requires no parameter configuration, and always converges to the competitive equilibrium in linear Fisher markets with a fast rate \citep{zhang2011proportional}.

Despite its mathematical attractiveness, the two-sided proportional allocation rule is less common to be observed in the real world (especially compared to the natural price update rule of t\^atonnement,  \cite{zhang2011proportional}), apart from the peer-to-peer (P2P) file sharing systems from which it is motivated \citep{wu2007proportional}.
In this paper, we generalize proportional dynamics to the case where each seller comes with \textit{multiple} items and buyers allocate budgets on the basis of sellers rather than items.
In particular, our new formulation bears a close resemblance to the \textit{online advertising} markets with first price auction and auto-bidding, which puts it in a context of great practical significance.

In today's online advertising auction markets, instead of directly competing with each other in every auction, advertisers could only participate through proxy \textit{auto-bidders} provided by the seller.
Within an advertising platform, the seller has a strong control over all the components, while buyers are largely price-taking.
Importantly, with budget-constrained value-maximizing auto-bidders, the online advertising auction market is exactly a linear Fisher market, and the steady-state named \textit{pacing equilibrium} formed by auto-bidders also has a deep relationship to the competitive equilibrium of the linear Fisher market \citep{conitzer2022pacing,conitzer2022multiplicative}.

Motivated by this connection, we incorporate the auto-bidding auction model into proportional dynamics, and explore three research questions.
First, to justify that our generalization is reasonable and meaningful, we need to establish that the dynamics converges consistently to the competitive equilibrium while maintaining a reasonable convergence rate.
Second, from the perspective of algorithmic game theory, we study the incentives for buyers and sellers to adhere to or deviate from the proportional allocation rule.
Finally, since the incentive analysis shows that buyers are more inclined to follow the rule while sellers are prone to deviate, we are interested in finding out which state the strategic behaviors of sellers might bring the market to, and whether a certain degree of fairness could still be guaranteed on the buyer side.

\subsection{Our Results}

In this paper, we extend proportional dynamics to the one-seller multi-item setting, which to the authors' knowledge has not been studied in the literature.
The generalized dynamics works as follows.
At the beginning of each round, each buyer reallocates its budget among sellers in proportion to the utility it receives from each seller in the last time step.\footnote{As in the original formulation, to make it work properly, every buyer should initially put a non-zero amount of money on \textit{every} seller.}
Within each sub-market that consists of all the items owned by a single seller, items are sold through simultaneous first price auctions (one for each item).
There is an auto-bidder provided by the seller using the \textit{multiplicative pacing} strategy to bid on behalf of each buyer, i.e., the bid for every item must have the form $\alpha v$, where $\alpha$ is the \textit{pacing multiplier} chosen by the auto-bidder to pace the rate at which the budget is spent and will be uniformly applied to \textit{all} items, and $v$ is the valuation of the item to the buyer.
Given the amount of money paid in advance to the seller,
the allocation is determined by presuming that auto-bidders reach the pacing equilibrium \citep{conitzer2022pacing}, at which the pacing multiplier of every buyer is simultaneously optimal in hindsight.
The pacing equilibrium is unique with respect to the utility received by each buyer, therefore the dynamics is fully deterministic.

We first show that, if both sides of the market truthfully implement the rule specified above, the dynamics always converges to the competitive equilibrium.
The one-seller multi-item setting brings new difficulties, and we will see in the proof how  pacing equilibrium helps us overcome them and hence justifies itself as a meaningful generalization of the per-item proportional rule.
For the rate of convergence, we show that the average \textit{Eisenberg-Gale objective}\footnote{This objective function is maximized at the competitive equilibrium and relates to the well-known fairness measure, the Nash social welfare. Therefore it serves as a reasonable measure of the distance from an off-equilibrium state to the competitive equilibrium. Its definition can be found in Section \ref{subsec:fisher_markets_and_competitive_equilibrium}.} function converges to the optimal value at a rate of $O(T^{-1})$, where $T$ denotes the number of rounds.

We proceed to explore the incentive of buyers and sellers to implement the (generalized) proportional rule.
For buyers, though not always optimal, the proportional rule gives a 2-approximation to the optimal utility. Moreover, the exact utility maximization problem is both non-convex and non-smooth, and its optimization requires non-local information that is hard to acquire in reality.\footnote{Besides, in practice, even knowing its sub-optimality, equalizing bang-per-bucks (or revenue-on-investments) among sellers is widely accepted and easily interpretable, particularly for  non-expert buyers.} Therefore buyers have a strong motivation to follow the proportional rule.

On the other hand, sellers have readily available instruments to deviate from the vanilla version of first price auction.
If the mechanism of \textit{additive boosts} is allowed (which exists almost everywhere in real-world auto-bidding markets and is widely studied in literate; see, e.g., \cite{balseiro2021robust}), sellers can subsidy or penalize auto-bidders at each auction while a generalized pacing equilibrium can still be reached.
Moreover, given any budget profile submitted by buyers, it is easy for the seller to compute proper boosts to make \textit{any} allocation as the generalized pacing equilibrium within its sub-market.
Therefore sellers can directly manipulate the allocation to compete for buyers’ budgets, anticipating their proportional allocating behavior.
	We show that the competitive equilibrium gives an incentive ratio of 5 to each seller, i.e., a unilateral deviation from the competitive equilibrium could bring the manipulator a revenue of at most 5 times the original.
	Nonetheless, the revenue optimization problem is convex and the information requirement is easy to meet. Therefore sellers are still prone to deviate.
	
Based on the incentive analysis, we formulate a seller-side game to study the consequence of seller deviation when buyers commit to the proportional rule.
	We show that the game always admits a unique pure Nash equilibrium (PNE).
	Though it generally does not agree with the competitive equilibrium, it guarantees a Nash social welfare at least a $(1 - \Delta)$ fraction of the optimal, where $\Delta \in (0, 1]$ is a parameter characterizing the degree of \textit{monopolization} of the market.
	Therefore as long as the market is not severely monopolized, the fairness of the competitive equilibrium can be largely preserved at the PNE.

\subsection{Related Work}

Our work sits at the intersection of two important lines of literature, namely the study on proportional dynamics in Fisher markets and on auto-bidding in online advertising auction markets.
Researchers in the former area typically stick to the setup where each seller brings only one item.
To the authors' knowledge, our work is the first to put the dynamics in a context where each seller brings multiple items.
In contrast, in auto-bidding markets each seller (advertising platform) naturally brings multiple items.
Importantly, \citet{conitzer2022pacing,conitzer2022multiplicative} relates the auto-bidding markets with either first or second price auction to linear Fisher markets.
They find that, with first price auction, the pacing equilibrium formed by auto-bidders can be viewed as a slightly adapted version of the competitive equilibrium,\footnote{The difference lies in that they study budget-constrained \textit{quasi-linear} utility-maximizing advertisers, while in Fisher markets value-maximizing buyers are considered (as done in this paper). To the authors' knowledge, value-maximizers significantly outweigh quasi-linear utility-maximizers in trading volume nowadays.} while with second price auction, it should additionally assume that each buyer is supply-aware.
These works constitute the theoretical foundation for us to incorporate auto-bidding into proportional dynamics in Fisher markets.
Nonetheless, \citet{conitzer2022multiplicative, conitzer2022pacing} consider the market owned by a single seller without outside competitors, while we extend it to the setting where multiple sellers compete with each other.
\citet{paes2020competitive} and \citet{despotakis2021first} also consider the competition among multiple auto-bidding platforms, in an attempt to explain the recent trend in the industry that more and more platforms shift from second price to first price auction, but their models are not built upon Fisher markets.
Below we will review related works from different lines of literature in more detail.

\paragraph{Online advertising auction markets with auto-bidding.}

Today auto-bidders are not yet diverse and powerful enough to optimize any utility function.
Linear-utility maximization with budget-constraint (the type of auto-bidder studied in this paper; see, e.g., \cite{gao2021online}) is one of the most adopted options.
Moreover, with limited feedback from the seller, typically buyers could only make decisions based on its payment and acquired value on each seller, even though they are not truly linear-utility maximizers.
As a result, linear Fisher markets serve as a good first approximation to online advertising markets and a starting point for more complex models \citep{conitzer2022multiplicative,conitzer2022pacing,li2023vulnerabilities}.
Another constraint also frequently used in practice is Return-On-Investment (ROI, or equivalently Return-On-Ad-Spend, ROAS) \citep{balseiro2021robust,golrezaei2021auction}, and sometimes both budget and ROI constraints are imposed \citep{aggarwal2019autobidding,deng2021towards}.
The multiplicative pacing strategy is one of the most implemented strategies in the industry.
It is applicable for both budget and ROI constrained bidders and for both first and second price auction \citep{conitzer2022multiplicative,conitzer2022pacing,li2023vulnerabilities}.

Both first and second price auction are usually augmented with the mechanism of additive boosts, which exists almost everywhere in the industry and is also widely studied in the literature.
Previously researchers \citep{deng2021towards,balseiro2021robust} mainly focus on the effect of boosts within the sub-market owned by a single seller.
Our work complements this line of research by showing how it could be utilized in the competition among multiple platforms.

First price auction, the auction format studied in this paper, is gaining popularity in recent years in the industry.
Researchers also show that, when auto-bidders apply the multiplicative pacing strategy, first price auction outperforms second price in many regards \cite{conitzer2022pacing, li2023vulnerabilities}, which endows sellers with a strong motivation to adopt it.
In particular, the pacing equilibrium can be computed efficiently in large-scale first price auction markets \cite{gao2021online}, which renders our dynamics computationally practical.
Though second price auction cannot be fully captured in our framework, with additive boosts and reserve prices \cite{balseiro2021robust}, sellers can manipulate the allocation in a similar way as dictated in Section \ref{sec:incentive_sellers}.
Therefore our results could also provide some insights into it.

\paragraph{Trading post game and proportional dynamics.}
The item-side proportional allocation rule is first studied by
\citet{shapley1977trade}. They propose the trading post game to study markets involving non-price-taking buyers, where each item comes with a trading post to handle  bid collection and item allocation.
The rule also emerges in the mechanism design literature on various resource allocation settings.
\citet{kelly1997charging} shows that, in a computer network, if the capacity of each link is allocated in proportion to the bid of each demand, the social welfare can be maximized at equilibrium.
\citet{feldman2005price} apply the rule in allocating computing resources and show that the Nash equilibrium attains a good performance on both efficiency and fairness.

\citet{wu2007proportional} first propose the proportional dynamics for a special case of Fisher markets modeling P2P file sharing systems such as BitTorrent. 
Within this particular context, each individual is simultaneously a buyer and a seller, and the proportional rule is indeed implemented as the protocol.
Note that, though a P2P network seems to be an exchange (Arrow-Debreu) market, this first proportional dynamics does not directly generalize to the exchange model \citep{branzei2021proportional} since there is no money involved.
After \citet{zhang2011proportional} generalizes the dynamics to general Fisher markets, researchers further extend it to various settings and establish its convergence \cite{cheung2018dynamics,cheung2018tracing,branzei2021proportional}.

\paragraph{Computation of competitive equilibrium.}
The computation of solution concepts is an ongoing topic in the study of economic models.
For Fisher markets, the tractability of computing competitive equilibrium largely depends on the utility functions of buyers.
For linear utilities, besides decentralized dynamics,  the competitive equilibrium can also be computed in a centralized way via convex programs \citep{eisenberg1959consensus, shmyrev2009algorithm} or combinatorial algorithms \citep[e.g.,][]{orlin2010improved}.
For other families of utilities, e.g., additively separable and PLC utilities, the problem becomes PPAD-hard \citep{chen2009spending}.
These complexity results put a limit on what we could expect from a decentralized dynamics, particularly on the rate of convergence.
In this paper, we show an \textit{ergodic} rate\footnote{I.e., the rate at which the average of some value (instead of the last iterate) converges to the limit point.}
$O(T^{-1})$ of convergence for the Eisenberg-Gale objective, which is identical to the rate established by \citet{branzei2021proportional} for proportional dynamics in Arrow-Debreu markets.
Previously the fast convergence of proportional dynamics is established mainly via interpreting the dynamics as a decentralized mirror descent process over a market-wise convex program \citep{birnbaum2011distributed,cheung2018dynamics}.
It is interesting to study whether such a connection exists for our one-seller multi-item setting or the Arrow-Debreu markets.

\paragraph{Nash social welfare and fairness.}
Nash social welfare (NSW) is a well-established measure of fairness in budget-based allocation.
In linear Fisher markets, the competitive equilibrium is characterized by the Eisenberg-Gale program \citep{eisenberg1959consensus} that optimizes the NSW.
When valuations are unknown to the market owner, \citet{cole2013mechanism} propose a truthful mechanism for buyers to achieve a 2.718-approximation of the optimal NSW.
The Shapley-Shubik trading post game, though untruthful, is shown to give a 2-approximation of NSW at any PNE, and it also guarantees proportionality for every individual buyer \citep{branzei2022nash}.
Note that the game considered in Section  \ref{sec:seller_competition_game} captures the competition among \textit{sellers}, while the aforementioned works focus on the buyer-side strategic behaviors.
Our work is quite distinct from previous results, and it is interesting to see that buyers can provide themselves a certain degree of fairness by collectively committing to the proportional allocation rule.

\section{Model Description}
\label{sec:model_description}

\subsection{Linear Fisher Markets and Competitive Equilibrium}
\label{subsec:fisher_markets_and_competitive_equilibrium}

A \textbf{linear Fisher market} consists of a set $J$ of $m$ divisible items (each with a normalized 1 unit of supply) and a set $I$ of $n$ buyers.
Each buyer $i$ has a budget $B_i$.
The utility function $u_i$ of buyer $i$ has the form
$
u_i(x) =
\sum_j v_{i, j} x_{i, j},
$
where $x_{i, j} \in [0, 1]$ is the fraction of item $j$ allocated to buyer $i$, and $v_{i, j} \geq 0$ is the value of item $j$ to buyer $i$.
For each item, there is at least one buyer $i$ with $v_{i, j} > 0$.
The total budgets of all buyers is denoted as $B = \sum_i B_i$.

A \textbf{competitive equilibrium} $(p^*, x^*)$ consists of a pricing $p^* \in [0, +\infty)^m$ and an allocation $x^* \in [0, 1]^{n \times m}$ that satisfies:
\begin{itemize}
	\item Each buyer gets allocated an optimal bundle it affords:
	
	$
	x_i^* \in \argmax_{x_i \in [0, +\infty)^{m}: \sum_j p_j x_{i, j} \leq B_i} \sum_j v_{i, j} x_{i, j};
	$
	\item The market clears:
	$
	\sum_i x_{i, j} = 1, \forall j.
	$
\end{itemize}
The competitive equilibrium is also known to satisfy the following properties.
\begin{itemize}
	\item $x$ is an equilibrium allocation if and only if it solves the \textbf{Eisenberg-Gale (EG) program}:
	\begin{align*}
		\max \quad & \sum_i B_i \ln u_i
		\\
		\text{s.t.} \quad &
		u_i \leq \sum_j v_{i, j} x_{i, j}, \forall i;
		\\
		&
		\sum_i x_{i, j} \leq 1, \forall j;
		\\
		& x_{i, j} \geq 0, \forall i, j.
	\end{align*}
	$\sum_i B_i \ln u_i$ is called the \textbf{EG-objective}.
	Note its relationship with our fairness measure NSW, defined as follows:
	\begin{displaymath}
		\text{NSW}(x) = \prod_{i} \bigparen{u_i(x)}^{\frac{B_i}{B}}.
	\end{displaymath}
	\item An equilibrium always exists, and is unique w.r.t. prices and buyers' utilities.
	Buyers always deplete their budgets at equilibrium.
	\item Fixing $p$, the \textbf{bang-per-buck} of item $j$ for buyer $i$ is $\frac{v_{i, j}}{p_j}$, the value received by buyer $i$ for each unit of money paid for item $j$.
	At equilibrium, every buyer gets its most wanted items with respect to the bang-per-buck, i.e., if $
	x_{i, j} > 0$,
	then
	$\frac{v_{i, j}}{p_j}
	\geq
	\frac{v_{i, j'}}{p_{j'}}, \forall j'$.
\end{itemize}

\subsection{Auto-bidding Markets and Pacing Equilibrium}

The traditional Fisher market model is \textit{seller-oblivious}, where sellers do not have the power to affect the price and the item-ownership has no effect on the competitive equilibrium.
In this paper, we assume that the items are partitioned into several sub-markets, each owned by a single seller $k$.
The set of all sellers is denoted as $K$.
We will consider the incentive issues of sellers and how their strategic behaviors would influence the market outcome.

For each sub-market owned by a single seller, we assume that items are sold through simultaneous first price auctions (one auction for one item) with an auto-bidding mechanism.
Each item is allocated to the buyer with the highest bid, and the winner should pay its bid.\footnote{Items are allowed to be allocated fractionally, as is commonly done in the auto-bidding literature.}
In contrast to bidding for each item directly,
in an auto-bidding market,
each buyer only submits in advance the amount of money it is willing to spend on the seller.
The bidding is instead handled by an auto-bidder (one for each buyer) using the \textit{multiplicative pacing} strategy, i.e., there is a \textit{pacing multiplier} $\alpha_i > 0$ for each buyer $i$ such that buyer $i$'s bid for item $j$ is $\alpha_i v_{i, j}$.
The steady-state reached by the auto-bidders is called a (first price) \textbf{pacing equilibrium} $(\alpha, x)$ \citep{conitzer2022pacing}, which consists of pacing multipliers $\alpha$ and allocation $x$ such that: (1) item $j$ is sold at price $p_j = \max_i \alpha_i v_{i, j}, \forall j$; (2) if $x_{i, j} > 0$, $\alpha_i v_{i, j} = p_j$; (3) $\sum_i x_{i, j} = 1, \forall j$; (4) $\sum_j p_j x_{i, j} = B_i, \forall i$.
The rule of first price auction is enforced by conditions (1-3), while condition (4) shows that each auto-bidder best responds to its opponents.
To see the latter, note that both the received utility and the payment of each auto-bidder are monotone with respect to the pacing multiplier $\alpha_i$, therefore depleting the budget is always the optimal solution.

Note that, given buyers' submitted budgets, the sub-market owned by each seller is itself a linear Fisher market.
The pacing equilibrium actually captures the same state (regarding allocation and payment) to the competitive equilibrium of this sub-market: here $p_j$ is exactly the price at the competitive equilibrium, and $\alpha_i$ is the inverse of the maximum bang-per-buck over all items.
Though technically equivalent, pacing equilibrium brings a new perspective to the classic Fisher market model, and even inspires new algorithms computing the competitive equilibrium at a large scale \citep{gao2021online}.
We refer readers to \citep{conitzer2022pacing} and its related works for more information.
In particular, though multiplicative pacing  is not optimal in hindsight for each buyer at the level of individual auctions, the resulting overall allocation is in many regards better than the second price pacing equilibrium \citep{conitzer2022pacing} that guarantees each buyer the ex-post optimality.
Combined with the on-going trend in the industry that platforms keep moving from second to first price auction, it is practically significant to understand the price-forming dynamics among competing sellers using first price auction and auto-bidding.

\subsection{Proportional Dynamics in Auto-bidding Markets}

With the introduction of auto-bidding, the generalized proportional dynamics is formally defined as follows.
\begin{definition}
	The \textbf{proportional dynamics} in auto-bidding markets is an iterative process where buyers repeatedly allocate their budgets among sellers and sellers allocate items via pacing equilibrium.
	Let $B_i(k, t)$  be the amount of money buyer $i$ allocates to seller $k$ at time $t$, and $u_i(k, t)$ be the utility received by buyer $i$ at the pacing equilibrium
	of the sub-market owned by seller $k$.
	Let $u_i(K, t) = \sum_k u_i(k, t)$.
	Buyers adopt the following update rule:
	$
	B_i(k, t + 1) = \frac{u_i(k, t)}{u_i(K, t)} \cdot B_i.
	$
\end{definition}

With our definition, the original proportional dynamics \citep{zhang2011proportional} is the special case where each seller owns only one item.
The buyer-side proportional rule follows the same spirit of the original formulation (with only a change of granularity), and in Section \ref{sec:proportional_dynamics} we will see why allocation via pacing equilibrium is a meaningful generalization of the proportional rule for sellers.
Note that for each buyer $i$, at time $t$, the only information required to update its budget allocation is $\{u_{i}(k,t)\}_{k \in K}$, which will always be known to it in the real world.

Below we will denote by $B_i(k, *)$ the money of buyer $i$ spent on seller $k$ at an arbitrary (but fixed) competitive equilibrium, and $u_i(K, *)$ the (unique) utility of buyer $i$ at the competitive equilibrium.
We will also use $x_{i, j}(t)$ and $x_{i, j}(*)$ to denote the allocation at time $t$ and at the competitive equilibrium, respectively.
When the time is not relevant or could be easily derived from the context, we will use $u_i(k)$ and $u_i(K)$ to denote the utility received by buyer $i$ from seller $k$ and all sellers, respectively.

\section{Convergence  of Proportional Dynamics}
\label{sec:proportional_dynamics}

In this section, we justify that the proportional dynamics still converges to the competitive equilibrium with a reasonable rate.
The proof centers around a potential function $\Phi(t)$ that generalizes the one used in establishing convergence in the one-seller one-item setting (for linear Fisher markets \citep{zhang2011proportional} and linear exchange markets \citep{branzei2021proportional}):
\begin{displaymath}
	\Phi(t)
	=
	\sum_{i, k} B_i(k, *) \ln \frac{B_i(k, *)}{B_i(k, t)},
\end{displaymath}
i.e., the KL-divergence between the budget allocation at the competitive equilibrium and the one at time $t$.
The main task is to show that the potential keeps decreasing over time.
We will see in the proof how the pacing equilibrium helps us overcome the difficulties brought by the one-seller multi-item setting.


\begin{theorem} \label{thm:convergence_non_personalized_pd}
	A competitive equilibrium is a fixed point of proportional dynamics, and
	proportional dynamics always converges to a competitive equilibrium.
\end{theorem}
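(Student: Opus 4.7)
The plan is to handle the two assertions separately. For the fixed-point claim, at any competitive equilibrium the common maximum bang-per-buck of buyer $i$ is $\beta_i^* = u_i(K,*)/B_i$, so the equilibrium bang-per-buck property gives $u_i(k,*) = \beta_i^* B_i(k,*)$ for every seller $k$; plugging this into the update rule yields $B_i(k,t+1) = (u_i(k,*)/u_i(K,*))\,B_i = B_i(k,*)$. Conversely, when buyers submit exactly $B_i(k,*)$, each sub-market's pacing equilibrium reproduces $x^*|_{J_k}$ by uniqueness of Fisher-market equilibrium utilities inside $J_k$.

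For convergence, the strategy is to show that $\Phi(t)$ is non-increasing along the trajectory, with strict decrease whenever the state is not a competitive equilibrium. Substituting the update rule and the sub-market identity $u_i(k,t)/B_i(k,t) = \beta_i(k,t)$ (the maximum bang-per-buck of buyer $i$ within seller $k$) into the definition of $\Phi$ gives
\begin{displaymath}
\Phi(t) - \Phi(t+1) = \sum_{i,k} B_i(k,*)\ln \beta_i(k,t) + \sum_i B_i \ln\frac{B_i}{u_i(K,t)}.
\end{displaymath}
Two sources of slack then drive the proof. First, the bang-per-buck inequality $v_{i,j} \le \beta_i(k,t)\,p_j(k,t)$ holds for every $j \in J_k$ at the sub-market pacing equilibrium; evaluating it against $x^*|_{J_k}$ yields $\beta_i(k,t) \ge u_i^*(k)/Q_i(k,t)$, where $Q_i(k,t) = \sum_{j \in J_k} p_j(k,t)\,x_{i,j}^*$. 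Second, the feasibility of $x(t)$ in the Eisenberg-Gale program gives $\sum_i B_i\ln u_i(K,t) \le \sum_i B_i\ln u_i(K,*)$. Using the identity $B_i(k,*)/u_i^*(k) = B_i/u_i(K,*)$ to rewrite the first bound per buyer, collapsing the per-seller terms by the log-sum inequality, and finally invoking Jensen's inequality together with market-clearing $\sum_i\sum_k Q_i(k,t) = \sum_j p_j(t) = B$, I expect to arrive at
\begin{displaymath}
\Phi(t) - \Phi(t+1) \ge \sum_i B_i \ln\frac{u_i(K,*)}{u_i(K,t)} + \sum_i B_i \ln\frac{B_i}{\sum_k Q_i(k,t)} \ge 0,
\end{displaymath}
with equality only when the current state is itself a competitive equilibrium.

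The main obstacle is the first slack source: in the single-item-per-seller case $\beta_i(k,t) = v_{i,k}/p_k(t)$ is explicit, whereas here $\beta_i(k,t)$ is defined only implicitly through the pacing equilibrium of the whole sub-market $J_k$. The argument requires the bang-per-buck inequality to hold uniformly for every item in $J_k$, not merely those that buyer $i$ actually receives, so that plugging in $x^*|_{J_k}$ is meaningful. This is precisely where the pacing-equilibrium structure justifies itself as the correct generalization of the item-level proportional rule.

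Once monotonicity is established, the remainder is routine. Since $\Phi$ is bounded below by $0$, $\Phi(t) \downarrow \Phi_\infty$ and $\Phi(t) - \Phi(t+1) \to 0$; both non-negative summands above must then vanish, and strict concavity of the EG-objective in the utility vector forces $u_i(K,t) \to u_i(K,*)$ for every $i$. Compactness of the budget simplex produces a subsequence converging to some competitive-equilibrium budget profile $B^{**}$, and since the same monotone-potential argument applies verbatim to $\Phi^{**}(t) = \sum_{i,k} B_i(k,**)\ln(B_i(k,**)/B_i(k,t))$, subsequential convergence to $B^{**}$ upgrades to convergence of the entire trajectory.
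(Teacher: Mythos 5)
Your proposal is correct and follows essentially the same route as the paper: the same KL potential $\Phi$, the same decomposition of $\Phi(t)-\Phi(t+1)$ into an EG-objective slack plus a slack coming from the pacing-equilibrium property that every item goes to the highest paced bid (your inequality $v_{i,j}\leq \beta_i(k,t)p_j(k,t)$ evaluated at $x^*$ is exactly the paper's bound $\tilde B\leq B$, just normalized per buyer via the log-sum inequality instead of globally), and the same final upgrade from subsequential to full convergence by re-defining the potential at a limit point $B^{**}$.
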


\begin{proof}
	The first half of the theorem is straightforward to verify.
	
	To establish convergence, we relate $\Phi(t)$ with $\Phi(t + 1)$ as follows:
	\begin{align*}
		B_i(k, *) \ln \frac{B_i(k, *)}{B_i(k, t + 1)}
		=&
		B_i(k, *) \ln \bigparen{ \frac{B_i(k, *)}{B_i} \cdot \frac{u_i(K, t)}{u_i(k, t)} }
		=
		B_i(k, *) \ln \bigparen{ \frac{u_i(k, *)}{u_i(K, *)} \cdot \frac{u_i(K, t)}{u_i(k, t)} }
		\\
		=&
		B_i(k, *) \ln \bigparen{\frac{B_i(k, *)}{B_i(k, *)} \cdot \frac{B_i(k, t)}{B_i(k, t)} \cdot \frac{u_i(k, *)}{u_i(K, *)} \cdot \frac{u_i(K, t)}{u_i(k, t)} }
		\\
		=&
		B_i(k, *) \ln \frac{B_i(k, *)}{B_i(k, t)}
		- B_i(k, *) \ln \frac{u_i(K, *)}{u_i(K, t)}
		- B_i(k, *) \ln \bigparen{\frac{B_i(k, *)}{B_i(k, t) \cdot \frac{u_i(k, *)}{u_i(k, t)}}},
	\end{align*}
	where the second equality holds since at the competitive equilibrium, the bang-per-bucks of all received items for a seller are equalized.\footnote{Regarding the risk of division by zero, note that $B_i(k, t)$ and $u_i(k, t)$ can become zero only if all items owned by seller $k$ have a zero value for buyer $i$. In this case, $B_i(k, *)$ must also be zero, and the corresponding terms can be safely ignored.}
	Summing over buyers and sellers gives
	\begin{align*}
		\Phi(t) - \Phi(t + 1)
		= &
		\sum_{i, k} B_i(k, *) \ln \frac{u_i(K, *)}{u_i(K, t)} +
		\sum_{i, k}  B_i(k, *) \ln \bigparen{\frac{B_i(k, *)}{B_i(k, t) \cdot \frac{u_i(k, *)}{u_i(k, t)}}}.
	\end{align*}
	For the first term, it holds that
	\begin{displaymath}
		\sum_{i, k} B_i(k, *) \ln \frac{u_i(K, *)}{u_i(K, t)}
		=
		\sum_i B_i \ln \frac{u_i(K, *)}{u_i(K, t)} \geq 0,
	\end{displaymath}
	since $u = u(K, *)$ maximizes the EG-objective $\sum_i B_i \ln u_i$.
	For the second term,\footnote{In the one-seller one-item setting \citep{zhang2011proportional}, based on the per-item proportional rule, the second term can be directly simplified to the KL-divergence between $p(*)$ and $p(t)$, which denotes the item prices at the equilibrium and at time $t$, respectively. This cannot be done here. We instead use the property of pacing equilibrium to bound the term, and thus justify that it is a non-trivial generalization of the per-item proportional rule.} let $\alpha_i(k, t) = \frac{B_i(k, t)}{u_i(k, t)}$ be the pacing multiplier for buyer $i$ at the pacing equilibrium of sub-market $k$.
	We have
	\begin{align*}
		\tilde{B} := & \sum_{i, k} B_i(k, t) \cdot \frac{u_i(k, *)}{u_i(k, t)}
		= 
		\sum_{k} \sum_i \alpha_i(k, t) \sum_{j \in J_k} v_{i, j} x_{i, j}(*) \\
		\leq &
		\sum_{k} \sum_i \alpha_i(k, t) \sum_{j \in J_k} v_{i, j} x_{i, j}(t)
		= 
		\sum_{k} \sum_i B_i(k, t)
		= B,
	\end{align*}
	where the inequality holds since, at the pacing equilibrium, every item is sold to the buyer with the maximum bid $\alpha_i v_{i, j}$.
	Then
	\begin{align*}
		 \sum_{i, k}  B_i(k, *) \ln \bigparen{\frac{B_i(k, *)}{B_i(k, t) \cdot \frac{u_i(k, *)}{u_i(k, t)}}} 
		\geq 
		\sum_{i, k}  B_i(k, *) \ln \bigparen{\frac{  B_i(k, *)}{ \frac{B}{\tilde{B}} \cdot B_i(k, t) \cdot \frac{u_i(k, *)}{u_i(k, t)}}}
		\geq 0,
	\end{align*}
	where the second inequality holds due to the non-negativity of KL-divergence.
	
	Now we have shown that $\Phi(t)$ is monotonically decreasing.
	By the non-negativity of KL divergence, $\Phi(t)$ is lower bounded.
	As a result, $\Phi(t)$ converges and $\Phi(t) - \Phi(t + 1)$ goes to zero as $t$ goes to infinity.
	Recall that $\Phi(t) - \Phi(t + 1) \geq \sum_{i, k} B_i(k, *) \ln \frac{u_i(K, *)}{u_i(K, t)} \geq 0$ and the equality only happens when $u_i(K, t) = u_i(K, *), \forall i$ by the uniqueness of utilities at the competitive equilibrium.
	Therefore $u_i(K, t) \rightarrow u_i(K, *)$ as $t \rightarrow \infty$.
	KL-divergence equals zero only if two distributions are identical, so we have
	$B_i(k, t) \cdot \frac{u_i(k, *)}{u_i(k, t)} \rightarrow B_i(k, *)$, i.e., the bang-per-bucks $\frac{u_i(k, t)}{B_i(k, t)}$  converge to $\frac{u_i(k, *)}{B_i(k, *)}$, the bang-per-bucks at the competitive equilibrium, and so do the item prices.
	
	To see the convergence of $B_i(k, t)$, first take the limit point $B_i(k, **)$ of some convergent sub-sequence (which exists as the sequence is bounded), then replace the $B_i(k, *)$ in the definition of $\Phi$ to $B_i(k, **)$.
	Now we have $\Phi(t)$ converges to 0 and thus $B_i(k, t)$ converges to $B_i(k, **)$ (if $\Phi(t)$ is defined with some $\{B_i(k, *)\}_{i \in I, k \in K}$ differing from the limit point, it will rest at some value larger than zero and this cannot rule out the possibility that $B_i(k, t)$ oscillates among several different equilibrium values).
\end{proof}

For the convergence rate, the techniques used in \cite{branzei2021proportional} are still applicable, which shows that the average EG-objective converges at a rate of $O(1/T)$.
To see this, by summing both sides of the inequality $\Phi(t) - \Phi(t + 1) \geq \sum_{i, k} B_i(k, *) \ln \frac{u_i(K, *)}{u_i(K, t)}$ over $t$, we have
$
\Phi(1) - \Phi(T + 1)
\geq
\sum_{t = 1}^T \bigparen{ \sum_{i} B_i \ln u_i(K, *) - \sum_i B_i \ln u_i(K, t) }.
$
Dividing both sides by $T$ gives the result.

\section{Incentive of Buyers}

Considering that sellers implement the proportional rule truthfully, we examine the incentive of buyers to deviate and focus on the buyer-side game defined as follows.

\begin{definition}
	A \textbf{budget allocation game} of a linear Fisher market is a normal form game among buyers.
	Each buyer $i$ allocates an amount $B_i(k)$ of money to seller $k$.
	The budget allocation should be feasible: $\sum_k B_i(k) \leq B_i, \forall i$.
	Items in $J_k$ are sold according to the allocation at  the pacing equilibrium with budgets $\{B_i(k)\}_{i \in I}$.
	The utility of each buyer is the total valuation it receives.
\end{definition}

In Appendix \ref{app:example_buyer_optimization}, we give an example to show that buyer's optimization problem could be both non-convex and non-smooth.
In addition, to learn the utility curve, information such as budgets and valuations of other buyers is required.
Without such knowledge, the buyer can only access the utility function point-by-point through repeated interaction with sellers, during which other buyers' budget allocation might change as well.
As a result, it is impractical for the buyer to manipulate its budget allocation and optimize its utility exactly.\footnote{In comparison, when each seller owns only one item, a profitable deviation is easily computable for buyers. See Appendix \ref{app:incentive_buyers_original} for a discussion.}

On the other hand, the proportional rule is not only easily implementable, but also provides a satisfying performance guarantee as shown below.

\begin{theorem} \label{thm:proportional_rule_buyer_guarantee}
	Fixing other buyers' budget allocation over sellers, if buyer $i$'s bang-per-bucks $\{\frac{u_i(k)}{B_i(k)}\}_{k \in K}$ are equalized among all sellers, then the corresponding budget allocation $\{B_i(k)\}_{k \in K}$ achieves a utility of at least $1/2$ of the optimal.
\end{theorem}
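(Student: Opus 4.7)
The plan is to reduce the bound to two one-variable structural properties of the function $f_k : [0, +\infty) \to \mathbb{R}_{\geq 0}$ defined, for each seller $k$, as buyer $i$'s utility at the pacing equilibrium of sub-market $k$ when buyer $i$ spends $B$ on seller $k$ and every other buyer's budget at seller $k$ is held fixed. Specifically I would invoke: (P1) $f_k$ is non-decreasing with $f_k(0) = 0$; and (P2) the ratio $f_k(B)/B$ is non-increasing on $(0, +\infty)$, or equivalently, buyer $i$'s pacing multiplier $\alpha_i(B) = B/f_k(B)$ is non-decreasing in $B$. Property (P1) is immediate (more budget cannot hurt a buyer in the pacing equilibrium), and (P2) is the standard decreasing-marginal-returns-of-money phenomenon in linear Fisher markets, which follows from the equivalence between pacing equilibrium and the competitive equilibrium of the sub-market (Section 2.2) together with sensitivity analysis of the corresponding Eisenberg--Gale program.

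Granting (P1) and (P2), let $\gamma$ denote the common bang-per-buck at the equalized allocation, so that $u_i(k) = \gamma B_i(k)$ for every seller and the total realized utility is $u := \sum_k u_i(k) = \gamma B_i$ (the proportional rule always exhausts the full budget). For any alternative feasible allocation $\{B_i^*(k)\}_{k \in K}$ with utility $u^* := \sum_k f_k(B_i^*(k))$, split the sellers into $K^+ = \{k : B_i^*(k) \geq B_i(k)\}$ and $K^- = \{k : B_i^*(k) < B_i(k)\}$. On $K^+$, property (P2) gives $f_k(B_i^*(k)) \leq \gamma B_i^*(k)$; on $K^-$, property (P1) gives $f_k(B_i^*(k)) \leq f_k(B_i(k)) = u_i(k)$. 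Summing yields
\begin{align*}
u^* \;\leq\; \gamma \sum_{k \in K^+} B_i^*(k) \;+\; \sum_{k \in K^-} u_i(k) \;\leq\; \gamma B_i + u \;=\; 2u,
\end{align*}
where the second inequality uses $\sum_{k \in K^+} B_i^*(k) \leq B_i$ and $\sum_{k \in K^-} u_i(k) \leq u$. Hence $u \geq u^*/2$, which is the desired $1/2$-approximation.

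The hard part is a clean verification of (P2): perturbing buyer $i$'s spending at seller $k$ simultaneously shifts the equilibrium prices, allocation, and every other buyer's pacing multiplier inside that sub-market, so a direct combinatorial argument gets tangled quickly. The route I would take is to fix the other buyers' budgets and treat buyer $i$'s equilibrium utility as the value function of a parametrized Eisenberg--Gale program of the sub-market; standard convex-analytic arguments about concave value functions then yield not only (P2) but the stronger statement that $f_k$ itself is concave on $[0, +\infty)$, which fully justifies both structural properties used above.
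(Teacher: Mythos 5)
Your reduction itself is exactly the paper's argument: the two structural facts you call (P1) and (P2) are precisely Lemma \ref{lemma:monotonicity_budget} (utility weakly increasing, bang-per-buck weakly decreasing in the submitted budget), and the split of sellers into those where the alternative spends at least as much (bounded by $\gamma\sum_k B_i^*(k)\leq \gamma B_i=u$ via (P2)) versus strictly less (bounded by $u$ via (P1)) is the same decomposition the paper uses, yielding $u^*\leq 2u$. So the $2$-approximation derivation is correct as written, granting that the equalized allocation exhausts the budget, which it does under the proportional rule.

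The genuine gap is in how you propose to establish (P2). You assert that treating buyer $i$'s equilibrium utility as the value of a parametrized Eisenberg--Gale program gives, via standard concave-value-function sensitivity analysis, the stronger statement that $f_k$ is concave on $[0,+\infty)$, and you rest (P2) on that. This is false: $f_k$ is one coordinate of the optimal \emph{solution} of the sub-market EG program, not its optimal value, and it is not concave in general. The paper's own Example \ref{example:non_convexity_buyer_optimization} (Appendix \ref{app:example_buyer_optimization}) is a counterexample: there $f(b)=2$ on $[1,4]$ and $f(b)=3-\frac{6}{b+2}$ on $[4,+\infty)$, so the slope jumps from $0$ up to $\frac{1}{6}$ at $b=4$; this non-concavity is exactly what makes the buyer-side problem non-convex in the paper. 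The monotonicity of $f_k(B)/B$ is nonetheless true, but it needs a different justification: the paper (Appendix \ref{app:budget_monotonicity}) derives it from the pacing-multiplier dominance property of first-price pacing equilibria (Lemma 1 of Conitzer et al.) --- raising your budget keeps the old multiplier profile budget-feasible, so the new equilibrium multiplier $\alpha_i(B)=B/f_k(B)$ weakly increases --- which is your reformulation of (P2), and similarly (P1) is not ``immediate'' but a known result on budget reporting being dominant. Replace the concavity claim with these arguments (or citations) and your proof coincides with the paper's.
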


To prove the theorem, we will use the monotonicity of the bang-per-buck and the received utility at the pacing equilibrium with respect to the buyer's submitted budget, which are known  results in the literature (see Appendix \ref{app:budget_monotonicity} for references).

\begin{lemma} \label{lemma:monotonicity_budget}
	Assume that seller $k$ truthfully implements the pacing equilibrium.
	Fixing other buyers' submitted budgets to seller $k$, buyer $i$'s received utility $u_i(k)$ is weakly increasing with respect to $B_i(k)$, while the bang-per-buck ratio $\frac{u_i(k)}{B_i(k)}$ is weakly decreasing with respect to $B_i(k)$.
\end{lemma}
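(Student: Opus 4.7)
My plan is to exploit two convex-program characterizations of the first-price pacing equilibrium of seller $k$'s sub-market, which, as discussed in Section~\ref{sec:model_description}, coincides with the competitive equilibrium of the linear Fisher market formed by $J_k$ and the submitted budgets $\{B_{i'}(k)\}_{i' \in I}$.

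For the utility monotonicity, I would run a revealed-preference argument on the Eisenberg-Gale (EG) program. Fix every other buyer's submitted budget to $k$ and consider two values $B_i(k)^A < B_i(k)^B$ with corresponding EG-optimal utility vectors $u^A$ and $u^B$. Since the feasible set of EG depends only on valuations and supplies, $u^A$ is feasible in the program at $B_i(k)^B$ and vice versa; writing the two optimality inequalities and adding them, the $\sum_{i' \ne i} B_{i'}(k) \ln u_{i'}$ terms cancel on both sides, leaving $(B_i(k)^B - B_i(k)^A)(\ln u_i^B - \ln u_i^A) \ge 0$, which yields $u_i(k)^B \ge u_i(k)^A$.

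For the bang-per-buck monotonicity, EG alone does not appear sufficient, so I would turn to Shmyrev's convex program, which characterizes the same equilibrium using bid variables $b_{i', j}$:
\begin{align*}
	\min_b \quad & \sum_{j \in J_k} p_j \ln p_j - \sum_{i' \in I,\, j \in J_k} b_{i', j} \ln v_{i', j} \\
	\text{s.t.} \quad & \sum_{j \in J_k} b_{i', j} = B_{i'}(k), \quad p_j = \sum_{i' \in I} b_{i', j}, \quad b_{i', j} \ge 0.
\end{align*}
This program is convex (each $p_j \ln p_j$ is convex as a composition of $x \ln x$ with a linear function of $b$), and its optimal value $V$, regarded as a function of $B_i(k)$ with the other budgets fixed, is convex because $B_i(k)$ enters only as the right-hand side of a linear equality constraint. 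KKT stationarity in $b_{i, j}$ says that whenever $b_{i, j} > 0$, the Lagrange multiplier $\mu_i$ of $\sum_{j \in J_k} b_{i, j} = B_i(k)$ satisfies $\mu_i = -1 - \ln(p_j / v_{i, j}) = -1 - \ln \alpha_i$, with $\alpha_i$ exactly the pacing multiplier. By the envelope theorem, $V'(B_i(k)) = -\mu_i = 1 + \ln \alpha_i$, and the convexity of $V$ forces this to be non-decreasing in $B_i(k)$. Therefore $\alpha_i$ is non-decreasing in $B_i(k)$, and the bang-per-buck $u_i(k)/B_i(k) = 1/\alpha_i$ is non-increasing.

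The main obstacle is the BPB direction: it is invisible from EG alone, and the cleanest route passes through Shmyrev's formulation, whose Lagrange multiplier coincides (up to an affine transform) with the pacing multiplier. A minor technicality is that $V$ may fail to be differentiable when the equilibrium bid profile is non-unique; the argument still goes through by replacing $V'$ with any sub-gradient, which remains monotone by convexity of $V$.
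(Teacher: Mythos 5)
Your proposal is correct, but it proves the lemma by a genuinely different route than the paper. The paper does not argue from scratch: in its appendix it simply invokes known results — the utility-monotonicity half is attributed to \citet{chen2011profitable} (reporting a larger budget, with fixed linear utilities, can only increase the received utility), and the bang-per-buck half is deduced from Lemma 1 of \citet{conitzer2022pacing}, which says the first-price pacing equilibrium multipliers pointwise dominate every budget-feasible multiplier profile; since after buyer $i$ raises $B_i(k)$ the old equilibrium multipliers remain budget-feasible, the new equilibrium multiplier $\alpha_i$ can only rise, so the bang-per-buck $1/\alpha_i$ can only fall. Your argument is self-contained: the revealed-preference comparison of the two EG optima gives utility monotonicity cleanly (the cross terms cancel because only buyer $i$'s weight changes), and the Shmyrev step is a genuinely different mechanism for the bang-per-buck direction — convexity of the optimal value in the right-hand side $B_i(k)$, together with the identification of the budget constraint's multiplier with $1+\ln\alpha_i$ (up to sign), forces $\alpha_i$ to be non-decreasing; your remark about subgradients is the right fix for non-unique optimal bids, since the multiplier set is the subdifferential of the value function and the subdifferential of a one-variable convex function is monotone. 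The paper's route buys brevity and reuse of a structural dominance property of FPPE; yours buys independence from those references and an argument that transfers to any variant still characterized by a Shmyrev-type program (e.g., with boosts). Two small points to tidy: fix the Lagrangian sign convention explicitly (a flipped sign would reverse the conclusion), and note the degenerate cases $B_i(k)=0$ or $v_{i,j}=0$ for all $j\in J_k$, where both monotonicity claims hold trivially and the logarithms in your EG argument would otherwise be undefined.
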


\begin{proof}[Proof of Theorem \ref{thm:proportional_rule_buyer_guarantee}]
	Assume that, except for buyer $i$, all the other buyers' budget allocations are fixed.
	Let $\{B_i(k)\}_{k \in K}$ be the budget allocation of buyer $i$ such that its bang-per-bucks are equalized across all sellers, and $u_i(k)$ be the utility received from seller $k$ by submitting $B_i(k)$ to it.
	Let $u_i = \sum_k u_i(k)$.
	
	Consider any other feasible budget allocation $\{\tilde{B}_i(k)\}_{k \in K}$.
	Let $\tilde{u}_i(k)$ be the utility received from seller $k$ by submitting $\tilde{B}_i(k)$ to seller $k$, and $\tilde{u}_i = \sum_k \tilde{u}_i(k)$.
	Partition $K$ into $K_{\leq}$ and $K_{>}$ such that  for each $k \in K_{\leq}$, $\tilde{B}_i(k) \leq B_i(k)$, and for each $k \in K_>$, $\tilde{B}_i(k) > B_i(k)$.
	
	By the first result of Lemma \ref{lemma:monotonicity_budget}, for each $k \in K_{\leq}$, buyer $i$ spends no more money, so it receives no more utility: $\tilde{u}_i(k) \leq u_i(k)$.
	By the second result of Lemma \ref{lemma:monotonicity_budget}, for each $k \in K_>$, buyer $i$ spends more money, so its bang-per-buck cannot be higher: $\frac{\tilde{u}_i(k)}{\tilde{B}_i(k)} \leq \frac{u_i(k)}{B_i(k)} = \frac{u_i}{B_i}$.
	Combining both parts together, we have:
	\begin{align*}
		\tilde{u}_i
		=
		\sum_{k \in K_{\leq}} \tilde{u}_i(k)
		+ \sum_{k \in K_{>}} \tilde{u}_i(k)
		\leq
		\sum_{k \in K_{\leq}} u_i(k)
		+ \sum_{k \in K_>} \tilde{B}_i(k) \cdot \frac{u_i}{B_i}
		\leq
		u_i + u_i = 2u_i.
	\end{align*}
\end{proof}


\section{Incentive of Sellers}
\label{sec:incentive_sellers}

%

In real-world markets, the additive boost mechanism is widely implemented by sellers to subsidize or penalize each buyer in each auction.
With additive boosts $c$, buyers are ranked by the boosted bids $\alpha_i v_{i, j} + c_{i, j}$, but if it wins, the payment is still the non-boosted bid $\alpha_i v_{i, j}$.
Note that adding a constant to all $c_{i, j}$ will not change the auction outcome.
Below we assume $c_{i, j} \geq 0, \forall i, j$.
A \textbf{pacing equilibrium with additive boosts} $c \in \realnum^{n \times m}$ consists of pacing multipliers $\alpha$ and allocation $x$ such that: (1) $p_j = \max_i \alpha_i v_{i, j} + c_{i, j}, \forall j$; (2) if $x_{i, j} > 0$, $\alpha_i v_{i, j} + c_{i, j} = p_j$; (3) $\sum_j (p_j - c_{i, j}) x_{i, j} = B_i, \forall i$; (4) $\sum_i x_{i, j} = 1, \forall j$.
Similar to the original version, pacing equilibrium with additive boosts can be characterized by a modified EG program. The result is given as Lemma \ref{thm:pacing_equilibrium_with_boosts} in Appendix \ref{app:pacing_equilibrium_with_boosts}.
We then show that, if each seller allocates items via the pacing equilibrium with fixed boosts throughout the dynamics, the convergence result remains to hold.

\begin{theorem} \label{thm:convergence_with_boosts}
	With fixed additive boosts $c$, proportional dynamics always converges to a (market-wise) pacing equilibrium with additive boosts $c$.
\end{theorem}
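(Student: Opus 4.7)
The plan is to adapt the KL-divergence potential argument from the proof of Theorem \ref{thm:convergence_non_personalized_pd}, keeping the same potential $\Phi(t) = \sum_{i,k} B_i(k,*) \ln \frac{B_i(k,*)}{B_i(k,t)}$, but now letting $*$ denote a market-wise pacing equilibrium with additive boosts $c$. By Lemma \ref{thm:pacing_equilibrium_with_boosts}, $(u(*), x(*))$ maximizes the modified EG objective $\mathrm{EG}^c(u, x) := \sum_i B_i \ln u_i + \sum_{i,j} c_{i,j} x_{i,j}$ over the usual feasible set, and the KKT multiplier $\alpha_i$ attached to buyer $i$'s utility constraint is common across all items owned by all sellers, so the bang-per-bucks $u_i(k,*)/B_i(k,*) = 1/\alpha_i$ remain equalized across sellers at $*$. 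This preserves the fixed-point identity $B_i(k,*)/B_i = u_i(k,*)/u_i(K,*)$, and hence the same algebraic rewriting as in the vanilla proof yields
\[ \Phi(t) - \Phi(t+1) = \sum_i B_i \ln \frac{u_i(K,*)}{u_i(K,t)} \;+\; \sum_{i,k} B_i(k,*) \ln \frac{B_i(k,*)/B_i(k,t)}{u_i(k,*)/u_i(k,t)}. \]

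Both terms must then be re-estimated. For the second term, the rescale-to-KL trick from the vanilla proof no longer delivers $\tilde{B} := \sum_{i,k} B_i(k,t) \cdot u_i(k,*)/u_i(k,t) \leq B$: the sub-market pacing equilibrium at time $t$ only maximizes $\sum_i (\alpha_i v_{i,j} + c_{i,j}) x_{i,j}$ rather than $\sum_i \alpha_i v_{i,j} x_{i,j}$, so testing $x(*)$ against $x(t)$ in each sub-market only gives $\tilde{B} \leq B + C(t) - C(*)$, where $C(\tau) := \sum_{i,j} c_{i,j} x_{i,j}(\tau)$. I would then drop the KL rescaling altogether and apply the elementary bound $\ln x \geq 1 - 1/x$ to get $\sum_{i,k} B_i(k,*) \ln \frac{B_i(k,*)/B_i(k,t)}{u_i(k,*)/u_i(k,t)} \geq B - \tilde{B} \geq C(*) - C(t)$. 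The first term $\sum_i B_i \ln \frac{u_i(K,*)}{u_i(K,t)}$ is no longer individually non-negative, since $\sum_i B_i \ln u_i$ is only one piece of what $(u(*), x(*))$ maximizes; but once added to the second-term bound the $C$-contributions cancel and produce the clean primal-suboptimality inequality
\[ \Phi(t) - \Phi(t+1) \;\geq\; \mathrm{EG}^c(u(*), x(*)) - \mathrm{EG}^c(u(t), x(t)) \;\geq\; 0, \]
where the final step is optimality of $*$ against the feasible iterate $(u(t), x(t))$.

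Once monotonicity of $\Phi$ is in hand, the remainder of the proof follows the same template as in the vanilla case: boundedness below of $\Phi$ forces $\Phi(t) - \Phi(t+1) \to 0$, hence $\mathrm{EG}^c(u(t), x(t)) \to \mathrm{EG}^c(u(*), x(*))$; strict concavity of $\sum_i B_i \ln u_i$ in $u$ pins down $u_i(K,t) \to u_i(K,*)$; asymptotic tightness of $\ln x \geq 1 - 1/x$ forces the bang-per-bucks to converge, $B_i(k,t) \cdot u_i(k,*)/u_i(k,t) \to B_i(k,*)$; and the same convergent-subsequence argument used at the end of Theorem \ref{thm:convergence_non_personalized_pd} upgrades this to convergence of the budget allocations themselves, noting that with boosts the equilibrium allocation need not be unique though its utility profile still is. The main obstacle, and the only genuinely new calculation, is the simultaneous loss of sign of \emph{both} terms of the decomposition; the decisive observation is that the boost-induced slack $C(t) - C(*)$ in the second-term bound is exactly what is needed to absorb the (possibly negative) first term and collapse everything into a modified-EG suboptimality. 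Establishing $\tilde{B} \leq B + C(t) - C(*)$ from the boost-augmented best-response characterization of the sub-market pacing equilibrium is the one place where the boosted case requires a new argument rather than a straightforward transfer.
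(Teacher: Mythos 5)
Your proposal is correct and follows essentially the same route as the paper: the same potential $\Phi$, the same decomposition via the fixed-point identity $B_i(k,*)/B_i = u_i(k,*)/u_i(K,*)$ at the boosted equilibrium, and the same key new inequality $\tilde{B} \leq B + C(t) - C(*)$ obtained from the max-boosted-bid property of the sub-market pacing equilibria. The only difference is in the final bookkeeping: you bound the second term with $a \ln (a/b) \geq a - b$ to get $\Phi(t) - \Phi(t+1) \geq \mathrm{EG}^c(u(*),x(*)) - \mathrm{EG}^c(u(t),x(t))$ directly, whereas the paper rescales by $\frac{B+C}{\tilde{B}}$ and minimizes an auxiliary function $f(C)$ at $C = 0$; the two arguments are equivalent.
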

The proof is given in Appendix \ref{app:convergence_with_boosts}.
Observe that, given the submitted budgets $\{B_i(k)\}_{i \in I}$, seller $k$ can implement \textit{any} feasible allocation as the pacing equilibrium by setting proper additive boosts: with the target allocation in mind, the pacing multiplier is given by the inverse of the bang-per-buck, and then boosts can be calculated such that the winner has the highest boosted bid.
Therefore a seller can directly choose a feasible allocation within its sub-market to attract buyers' proportionally allocated budgets.
The following theorem characterizes how much a seller can gain by deviating from the competitive equilibrium when buyers commit to the proportional rule.
\begin{theorem} \label{thm:constant_exploitability}
	Let $R^{\text{OPT}}_k$ be the optimal revenue of seller $k$ if all the other sellers choose their competitive equilibrium allocation, and $R^*_k$ be its revenue at any competitive equilibrium. Then
	$
	R_k^{\text{OPT}} \leq 5 R_k^*.
	$
\end{theorem}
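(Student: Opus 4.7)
The plan is to reformulate $R_k^{\text{OPT}}$ as a concave maximization over seller $k$'s allocations and then bound it using the bang-per-buck characterization of the competitive equilibrium. First, I would pin down the steady state when seller $k$ deviates. Since the other sellers commit via boosts to their CE allocation (Theorem~\ref{thm:convergence_with_boosts} guarantees convergence under any fixed boosts), buyer $i$'s utility from the rest of the market is a constant $a_i := u_i(K, *) - u_i(k, *)$. Under any feasible allocation $\hat{x}$ chosen by seller $k$, buyer $i$'s utility from $k$ becomes $\hat{u}_i(k) = \sum_{j \in J_k} v_{i,j}\, \hat{x}_{i,j}$, and the proportional fixed point pins $B_i(k) = B_i\, \hat{u}_i(k)/(\hat{u}_i(k) + a_i)$. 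So
\begin{equation*}
R_k^{\text{OPT}} \;=\; \max_{\hat{x}} \;\sum_i \frac{B_i\, \hat{u}_i(k)}{\hat{u}_i(k) + a_i}, \qquad \text{s.t. } \sum_i \hat{x}_{i,j} \leq 1 \text{ for } j \in J_k,\; \hat{x}_{i,j} \geq 0.
\end{equation*}

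Next, fix any maximizer $\hat{x}$ and split the buyers into \emph{gainers} $A = \{i : \hat{u}_i(k) \geq u_i(k, *)\}$ and \emph{losers} $\bar{A}$. For losers, monotonicity of $u \mapsto u/(u + a_i)$ combined with the CE bang-per-buck identity $u_i(k, *)/u_i(K, *) = B_i(k, *)/B_i$ (which holds whenever $B_i(k, *) > 0$, and otherwise both sides vanish) gives
\begin{equation*}
\sum_{i \in \bar{A}} \frac{B_i\, \hat{u}_i(k)}{\hat{u}_i(k) + a_i} \;\leq\; \sum_{i \in \bar{A}} \frac{B_i\, u_i(k, *)}{u_i(K, *)} \;=\; \sum_{i \in \bar{A}} B_i(k, *) \;\leq\; R_k^*.
\end{equation*}
For gainers, $\hat{u}_i(k) + a_i \geq u_i(K, *)$, so each summand is bounded by $B_i\, \hat{u}_i(k)/u_i(K, *)$. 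The crucial CE ingredient is the bang-per-buck inequality $v_{i,j}/p_j^* \leq u_i(K, *)/B_i$ for every $i, j$, i.e., $B_i v_{i,j}/u_i(K, *) \leq p_j^*$. Swapping the order of summation and using feasibility $\sum_i \hat{x}_{i,j} \leq 1$ pointwise in $j$ together with $R_k^* = \sum_{j \in J_k} p_j^*$ (CE market clearing on sub-market $k$), this gives
\begin{equation*}
\sum_{i \in A} \frac{B_i\, \hat{u}_i(k)}{u_i(K, *)} \;=\; \sum_{j \in J_k} \sum_{i \in A} \frac{B_i v_{i,j}}{u_i(K, *)} \hat{x}_{i,j} \;\leq\; \sum_{j \in J_k} p_j^* \;=\; R_k^*.
\end{equation*}
Summing the gainer and loser contributions yields $R_k^{\text{OPT}} \leq 2 R_k^*$, which in particular implies the claimed $R_k^{\text{OPT}} \leq 5 R_k^*$.

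The main obstacle is that an individual $\hat{u}_i(k)$ can be much larger than $u_i(k, *)$, so one cannot compare revenues buyer by buyer to the CE revenue. The CE bang-per-buck inequality resolves this by reducing each item's contribution, after aggregation across buyers via feasibility, to at most its CE price $p_j^*$; the gainer/loser split is a convenience so that exactly one of the two tight estimates---$\hat{u}_i(k) + a_i \geq u_i(K, *)$ on the gainer side, and $\hat{u}_i(k)/(\hat{u}_i(k)+a_i) \leq u_i(k, *)/u_i(K, *)$ on the loser side---applies on each side.
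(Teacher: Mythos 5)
Your proof is correct, and it in fact establishes a stronger bound than the paper: your gainer/loser decomposition yields $R_k^{\text{OPT}} \leq 2R_k^*$, which trivially implies the stated factor of $5$. The paper's route is quite different: it bounds $\frac{u_i}{u_i+w_i^*}$ by logarithmic expressions, invokes optimality of the competitive equilibrium for the EG-objective to replace $\ln(u_i+w_i^*)$ by $\ln(u_i^*+w_i^*)$, obtains a factor $3$ under the extra hypothesis $u_i \geq u_i^*/2$ for all $i$, and then removes that hypothesis by concavity of $R_k(\cdot,w^*)$ at the cost of degrading $3$ to $5$. Your argument instead exploits two pieces of local CE structure: on losers, monotonicity of $u \mapsto u/(u+a_i)$ plus the equal-bang-per-buck identity reduces their contribution to $\sum_i B_i(k,*) = R_k^*$ (the CE spend on seller $k$); on gainers, the inequality $B_i v_{i,j}/u_i(K,*) \leq p_j^*$ together with $\sum_i \hat{x}_{i,j} \leq 1$ reduces their contribution to $\sum_{j\in J_k} p_j^* = R_k^*$ (the CE prices of seller $k$'s items). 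This is more elementary (no EG program, no concavity bootstrap) and the constant $2$ is actually tight: with one captive buyer holding budget equal to $p_j^*$ and a large outside buyer with bang-per-buck exactly matching the CE price, the deviating seller's revenue approaches $2R_k^*$. Two small remarks: the appeal to Theorem~\ref{thm:convergence_with_boosts} is unnecessary (and not quite the right citation), since once all sellers fix allocations the buyers' utilities are constant and the proportional update reaches its fixed point in one step, which is exactly the revenue formula the paper itself takes as the definition of $R_k(u,w^*)$; and, like the paper's own proof, your argument implicitly assumes each buyer has positive value for at least one item so that $u_i(K,*) > 0$.
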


\begin{proof}
	For any $x + c > 0$, it holds that
	$
		\frac{x}{x + c} \leq \ln(x + c) - \ln c.
	$
	We also have the following inequality:
	$
		\frac{x}{x + c}  \leq  1 = \ln(x + c) - \ln(x + c) + 1.
	$
	For each buyer $i$, let $w_i^*$ be the total valuation received by it from all sellers other than $k$ at some competitive equilibrium, and $u_i^*$ be the valuation received from seller $k$.
	Define $u_i = \sum_{j \in J_k} v_{i, j} x_{i, j}$.
	If all sellers other than $k$ choose their competitive equilibrium allocation, seller $k$'s revenue can be bounded as follows:
	\begin{align*}
	 	R_k(u, w^*)
		= 
		\sum_i B_i \cdot \frac{u_i}{u_i + w_i^*}
		\leq &
		\sum_i B_i  \ln(u_i + w_i^*)
		+ \sum_i B_i \cdot  \min \bigparen{-\ln w_i^*, -\ln(u_i + w_i^*) + 1}
		\\
		\leq &
		\sum_i B_i  \ln(u_i^* + w_i^*)
		+ \sum_i B_i \cdot  \min \bigparen{-\ln w_i^*, -\ln(u_i + w_i^*) + 1}
		\\
		= &
		\sum_i B_i \cdot \min \bigparen{ \ln \bigparen{\frac{u_i^* + w_i^*}{w_i^*}}, \ln \bigparen{\frac{u_i^* + w_i^*}{u_i + w_i^*}} + 1 }
		\\
		\leq &
		\sum_i B_i \cdot \min \bigparen{ \frac{u_i^*}{w_i^*}, \frac{u_i^* + w_i^*}{u_i + w_i^*} },
	\end{align*}
	where the second inequality holds since $(u^*, w^*)$ maximizes the EG-objective, and the last one holds since $\ln x \leq x - 1$.
	
	Suppose that, for every buyer $i$, $u_i \geq \frac{u_i^*}{2}$, then $\frac{u_i^* + w_i^*}{u_i + w_i^*} \leq 2$. In this case,
	\begin{align*}
		R_k(u, w^*)
		\leq 
		\sum_i B_i \cdot \min \bigparen{\frac{u_i^*}{w_i^*}, 2}
		=
		\sum_i B_i \cdot \frac{u_i^*}{u_i^* + w_i^*} \cdot \min \bigparen{\frac{u_i^* + w_i^*}{w_i^*}, \frac{2 (u_i^* + w_i^*)}{u_i^*}}
		\leq 
		3 R_k(u^*, w^*).
	\end{align*}
	
	Recall that $R_k(u, w)$ is a concave function with respect to $u$.
	Let $u^{\text{OPT}}$ be the best response of seller $k$ against $w^*$. Then
	\begin{align*}
		R_k(u^{\text{OPT}}, w^*)
		\leq 
		2 R_k\bigparen{\frac{u^{\text{OPT}} + u^*}{2}, w^*} - R_k(u^*, w^*)
		\leq 5 R_k(u^*, w^*).
	\end{align*}
	
\end{proof}

Nonetheless, a constant incentive ratio might be insufficient to prevent sellers from deviation.
Consider a seller-driven dynamics as follows: at time $t$, seller $k$ allocates its items to buyers with $\{x_{i, j}(t)\}_{j \in J_k}$ and buyer $i$ receives utilities $\{u_i(k, t) = \sum_j v_{i, j} x_{i, j}(t)\}_{k \in K}$.
Buyers still update their budget allocations according to the proportional rule. This determines the revenue of each seller at time $t + 1$.
From the perspective of online optimization, at time $t$, seller $k$ receives a convex loss:
$
\sum_i B_i \cdot \frac{ \sum_{k' \neq k} u_i(k', t)}{u_i(k, t) + \sum_{k' \neq k} u_i(k', t)}.
$
The information required for each seller to optimize it (even in an online fashion) is almost minimal (only $B_i, i \in I$ is needed).\footnote{$u_i(k, t)$ is also required, but it is always known to the seller since in auto-bidding markets the valuations are typically produced by machine learning models of the seller. $B_i$ may be slightly harder to acquire. It could be reported in the financial report of the buyer, or learned from direct communication with its advertising team.}
Hence sellers are more prone to manipulate the pacing equilibrium.

\section{Seller Competition and Fairness Guarantee}
\label{sec:seller_competition_game}

In this section, we explore the consequence of deviations from the proportional dynamics.
Based on the analysis in previous sections, buyers have a strong tendency to follow the proportional rule.
Therefore we put the focus on the strategic competition among sellers.
In this section, we assume that there are at least two sellers owning items with $v_{i, j} > 0$ for each buyer $i$.\footnote{If there is only one seller who provides positive utility for a buyer, the seller can fully absorb the buyer's budget with an infinitesimal amount of items.}

\begin{definition} \label{def:seller_game_pr}
	A \textbf{seller competition game} is defined over a linear Fisher market among sellers.
	The strategy space of seller $k$ is the set of feasible allocations $x_{i, j} \in [0, 1], i \in I, j \in J_k$.
	The revenue of seller $k$ with allocation profile $x$ is given by
	$
	R_k(x) = \sum_i B_i \cdot  \frac{\sum_{j \in J_k} v_{i, j} x_{i, j}}{\sum_{j \in J_k} v_{i, j} x_{i, j} + \sum_{j \in J \setminus J_k} v_{i, j} x_{i, j}}.
	$
	Without loss of generality,\footnote{See Appendix \ref{app:wlog_seller_competition_game} for why the assumption is without loss of generality.} we only consider allocation profiles $x$ such that $ \forall i, k$, $\sum_{j \in J \setminus J_k} v_{i, j} x_{i, j} \geq \epsilon$ with a sufficiently small $\epsilon > 0$.
	With this assumption, the revenue is always continuous and the set of allocation profiles is still compact and convex.
	A \textbf{pure Nash equilibrium} (PNE) of a seller competition game is an allocation profile $x$ such that each seller is best responding, i.e., $R_k(x) \geq R_k(x')$ for all $x'$ that agrees with $x$ for all $j \notin J_k$.
	
	In addition, note that if two allocation profiles produce the same utility profile $\{u_i(k)\}_{i, k}$, their corresponding game outcomes $\{R_{k}(x)\}_k$ are also identical.
	As a result, the strategy space of seller $k$ can be equivalently defined as the set of feasible utilities $\{u_i(k)\}_i$.
	The PNE could be defined in the same spirit, and we still restrict our attention on utility profiles such that $ \forall i, k$, $\sum_{k' \neq k} u_i(k') \geq \epsilon$ with a sufficiently small $\epsilon > 0$.
\end{definition}

\begin{theorem} \label{thm:unique_seller_competition_game}
	The seller competition game always admits a unique pure Nash equilibrium.
\end{theorem}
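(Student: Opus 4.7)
The plan is to establish existence and uniqueness separately, with uniqueness via Rosen's (1965) strict diagonal concavity / strict monotonicity of the pseudo-gradient. For existence, I would use the reformulation in Definition \ref{def:seller_game_pr}: seller $k$'s strategy space can be taken to be the set of feasible utility vectors $u_k=(u_i(k))_{i\in I}$, which is the image under a linear map of the $x$-polytope, hence convex and compact. The payoff $R_k(u)=\sum_i B_i\,u_i(k)/u_i(K)$ is continuous under the $\epsilon$-restriction (denominators are bounded away from zero) and, as a function of $u_k$ alone with $u_{-k}$ fixed, is a sum of strictly concave functions $t\mapsto t/(t+w_i)$ with $w_i=\sum_{k'\neq k}u_i(k')\ge\epsilon>0$. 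Existence of a PNE then follows from the standard concave-game fixed-point result (Debreu--Fan--Glicksberg / Rosen).

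For uniqueness I would invoke Rosen's sufficient condition: if the pseudo-gradient $g(u)$ with components $g_{k,i}(u)=\partial R_k/\partial u_i(k)=B_i\,(u_i(K)-u_i(k))/u_i(K)^2$ satisfies $\langle g(u)-g(u'),\,u-u'\rangle<0$ for every two distinct feasible profiles, then the PNE is unique. The standard argument: if $u,u'$ are both PNEs, the first-order optimality of each seller against the opponents' play gives $\langle\nabla_{u_k}R_k(u),u_k'-u_k\rangle\le 0$ and $\langle\nabla_{u_k}R_k(u'),u_k-u_k'\rangle\le 0$; summing over $k$ yields $\langle g(u)-g(u'),u-u'\rangle\ge 0$, contradicting strict monotonicity unless $u=u'$.

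The main technical work is verifying strict monotonicity, and this is where I expect the main obstacle to lie. The inner product decomposes buyer-by-buyer: $\langle g(u)-g(u'),u-u'\rangle=\sum_i G_i$, where, writing $S=u_i(K)$, $S'=u_i'(K)$, $p_k=u_i(k)/S$, $q_k=u_i'(k)/S'$, and $s=\sum_k p_kq_k$, $P=\sum_k p_k^2$, $Q=\sum_k q_k^2$, $\alpha=S/S'+S'/S\ge 2$, a direct expansion yields the clean identity
\[
G_i/B_i \;=\; (2-\alpha)(1-s) \;+\; \bigl(2s-P-Q\bigr).
\]
The second term equals $-\sum_k(p_k-q_k)^2\le 0$, vanishing iff $p=q$. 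The first term is $\le 0$ because $\alpha\ge 2$ and $s\le\sqrt{PQ}\le 1$ by Cauchy--Schwarz, vanishing iff either $S=S'$ or $s=1$. Hence $G_i\le 0$ with equality only in (i) $p=q$ and $S=S'$, which gives $u_i=u_i'$, or (ii) $p=q$ concentrated on a single seller $k^*$ (Dirac case).

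The Dirac case is precisely what the paper's $\epsilon$-restriction is designed to exclude: if $p,q$ both put all mass on $k^*$ then $\sum_{k'\neq k^*}u_i(k')=0<\epsilon$, violating feasibility. Therefore $G_i<0$ whenever $u_i\neq u_i'$, and since $u\neq u'$ implies $u_i\neq u_i'$ for some $i$, strict monotonicity holds and Rosen's argument closes. The hardest step is getting the identity for $G_i/B_i$ into the transparent form above so that the equality cases can be isolated and matched against the feasibility assumption; the earlier expansion of $\Phi(t)-\Phi(t+1)$ in the proof of Theorem \ref{thm:convergence_non_personalized_pd} suggests the kind of bookkeeping needed.
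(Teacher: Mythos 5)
Your proof is correct, but it takes a genuinely different route from the paper. The paper proves existence and uniqueness in one stroke via a saddle-point argument: it defines the two-argument function $f(u,w)=\sum_{i,k}B_i\,u_i(k)/\bigl(u_i(k)+\sum_{k'\neq k}w_i(k')\bigr)$ on $U\times U$, notes it is strictly concave in $u$, convex in $w$, and equals the total budget $B$ on the diagonal, and then applies the minimax theorem; strict concavity forces the saddle point onto the diagonal, which is exactly a PNE, and the same strict concavity kills any second equilibrium. Your route is the standard concave-game toolkit: Debreu--Fan--Glicksberg/Rosen for existence, and Rosen's strict monotonicity of the pseudo-gradient for uniqueness, with the real work being the per-buyer identity $G_i/B_i=(2-\alpha)(1-s)+(2s-P-Q)$, which I checked and which is correct (as is your equality-case analysis: equality forces either $u_i=u_i'$ or the Dirac/monopoly configuration, which the $\epsilon$-restriction excludes). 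The trade-off: the paper's argument is shorter once one guesses $f$, and it yields the extra fact that total seller revenue at the unique PNE is exactly $B$; your argument uses heavier but entirely standard machinery, makes completely explicit where strictness can fail (concentration of a buyer's utility on a single seller, i.e.\ the monopolized case the $\epsilon$-assumption is designed to rule out), and is arguably more robust to perturbations of the payoff. One shared caveat: the $\epsilon$-restriction couples the sellers' strategy sets, so strictly speaking both your first-order/best-response conditions and the paper's $\argmax_{u\in U}f(u,u^*)$ step treat deviations over a non-product set; the paper relegates this to the informal ``without loss of generality'' discussion in its appendix, and your proof inherits the same informality rather than introducing a new gap.
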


The proof is in Appendix \ref{app:unique_seller_competition_game}.
In general, the PNE of the seller competition game is not a competitive equilibrium.
But we have the following approximation guarantee for the NSW.

\begin{theorem}
	The NSW at the PNE of the seller competition game is at least $(1 - \Delta)$ of the optimal, where $\Delta := \max_{i, k} \frac{B_{i}(k, *)}{B_i}$ is the maximum fraction of budget earned by a single seller for a single buyer.
\end{theorem}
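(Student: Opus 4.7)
The plan is to extract a single scalar inequality from the PNE conditions of all sellers, then convert it into an NSW bound via weighted AM--GM.

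First I would exploit the concavity of each seller's revenue in its own allocation. Fixing the PNE allocations of other sellers, seller $k$'s revenue $R_k(x) = \sum_i B_i \cdot \frac{u_i(k)}{u_i(k) + w_{i,k}}$, with $w_{i,k} := u_i(K) - u_i(k)$ held fixed, is concave in the (linear) map $x \mapsto u(k)$. Hence best-response optimality at the PNE is equivalent to the variational inequality $\nabla R_k(x) \cdot (x' - x) \leq 0$ for every feasible deviation $x'$ of seller $k$. Computing $\frac{\partial R_k}{\partial x_{i,j}} = B_i v_{i,j} \cdot \frac{w_{i,k}}{(u_i(K))^2}$ for $j \in J_k$ and taking the test deviation $x'_{(k)} = x^*_{(k)}$ (the competitive-equilibrium allocation restricted to seller $k$'s items, which is feasible), I obtain
\begin{equation*}
	\sum_i B_i \cdot \frac{w_{i,k}}{(u_i(K))^2} \cdot u_i(k,*) \;\leq\; \sum_i B_i \cdot \frac{w_{i,k}}{(u_i(K))^2} \cdot u_i(k).
\end{equation*}

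Next I would introduce the PNE budget-shares $s_{i,k} := u_i(k)/u_i(K)$ (so $\sum_k s_{i,k} = 1$ and $w_{i,k} = (1 - s_{i,k}) u_i(K)$) to rewrite the inequality as $\sum_i \frac{B_i(1 - s_{i,k})}{u_i(K)} u_i(k,*) \leq \sum_i B_i (1 - s_{i,k}) s_{i,k}$, and sum over $k$. Using $\sum_k u_i(k,*) = u_i(K,*)$ and $\sum_k s_{i,k}(1-s_{i,k}) = 1 - \sum_k s_{i,k}^2 \leq 1$, the sum collapses to
\begin{equation*}
	\sum_i \frac{B_i \cdot u_i(K,*)}{u_i(K)} - \sum_i \frac{B_i}{u_i(K)} \sum_k s_{i,k} u_i(k,*) \;\leq\; B.
\end{equation*}
Now I bring in $\Delta$: the bang-per-buck equalization at the competitive equilibrium gives $u_i(k,*) = \frac{B_i(k,*)}{B_i} \cdot u_i(K,*) \leq \Delta \cdot u_i(K,*)$, so $\sum_k s_{i,k} u_i(k,*) \leq \Delta \cdot u_i(K,*)$. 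Substituting this and rearranging yields the clean bound
\begin{equation*}
	(1-\Delta) \sum_i \frac{B_i \cdot u_i(K,*)}{u_i(K)} \;\leq\; B,
	\qquad \text{i.e.,} \qquad
	\sum_i \frac{B_i}{B} \cdot \frac{u_i(K,*)}{u_i(K)} \;\leq\; \frac{1}{1-\Delta}.
\end{equation*}

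Finally, applying weighted AM--GM with weights $B_i/B$ to the reciprocals $u_i(K,*)/u_i(K)$ gives $\prod_i (u_i(K,*)/u_i(K))^{B_i/B} \leq \frac{1}{1-\Delta}$, and taking the reciprocal yields $\mathrm{NSW}(x^{\mathrm{PNE}})/\mathrm{NSW}(x^*) \geq 1 - \Delta$. The hard part of this argument is not the algebra itself but identifying the right linking step: the seller-level FOC produces weights $w_{i,k}/(u_i(K))^2$ that look disconnected from the global NSW weights $B_i/u_i(K)$, and it is the factor $(1 - s_{i,k})$ together with the identity $\sum_k s_{i,k} = 1$ that, upon summation, converts one weighting into the other and makes $\Delta$ appear as the \emph{only} slack, precisely via the inequality $u_i(k,*) \leq \Delta \cdot u_i(K,*)$ that characterizes non-monopolization.
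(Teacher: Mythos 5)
Your proposal is correct and follows essentially the same route as the paper's proof: the same first-order (variational) inequality for each seller at the PNE, the same decomposition of $\sum_i \frac{B_i}{B}\cdot\frac{u_i(K,*)}{u_i(K)}$ into a part bounded by $B$ via the PNE condition and a part bounded by $\Delta$ times itself, followed by weighted AM--GM. The only difference is cosmetic bookkeeping: you work directly with the utility shares $s_{i,k}$, while the paper converts to budget shares $B_i(k,\dagger)B_i(k,*)$ via bang-per-buck equalization before invoking $\Delta$.
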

\begin{proof}
	Let $u_i(k, \dagger)$ be the utility received by buyer $i$ from seller $k$ at the PNE, and $u_i(k, *)$ be the utility at the competitive equilibrium.
	For the convenience of notation, in this proof we let $u_i(\dagger) = \sum_k u_i(k, \dagger)$ and $u_i(*) = \sum_k u_i(k, *)$.
	
	By first-order optimality of each seller's revenue at PNE, we have
	\begin{equation} \label{eqn:pne}
		\sum_i B_i \cdot \frac{u_i(\dagger) - u_i(k, \dagger)}{(u_i(\dagger)^2)}
		\cdot \bigparen{ u_i(k, *) - u_i(k, \dagger) } \leq 0, \forall i.
	\end{equation}
	
	By the definition of $\Delta$, we have
	\begin{align} \label{eqn:delta}
		\begin{split}
			\sum_k B_i(k, \dagger) B_i(k, *)
			\leq
			\bigparen{\sum_k B_i(k, \dagger)} \cdot  \max_{k} \bigbraces{ B_i(k, *) }
			 \leq
			\Delta B_i^2.
		\end{split}
	\end{align}
	
	We first consider the arithmetic mean of the buyer-wise ratio $\frac{u_i(*)}{u_i(\dagger)}$, which can be decomposed into two parts:
	\begin{align*}
		\sum_i B_i \cdot \frac{u_i(*)}{u_i(\dagger)}
		=
		\sum_{i, k} B_i \cdot \frac{u_i(k, *)}{u_i(\dagger)}
		=
		\sum_{k, i} B_i \cdot \frac{u_i(k, *)}{u_i(\dagger)} \cdot \bigparen{1 - \frac{u_i(k, \dagger)}{u_i(\dagger)} }
		+
		\sum_{i, k} B_i \cdot \frac{u_i(k, *)}{u_i(\dagger)} \cdot  \frac{u_i(k, \dagger)}{u_i(\dagger)}.
	\end{align*}
	Using inequality (\ref{eqn:pne}), the first part can be bounded as follows:
	\begin{align*}
		\sum_k \sum_i B_i \cdot \frac{u_i(k, *)}{u_i(\dagger)} \cdot \bigparen{1 - \frac{u_i(k, \dagger)}{u_i(\dagger)} }
		= &
		\sum_k \sum_i B_i \cdot  \frac{u_i(\dagger) - u_i(k, \dagger)}{(u_i(\dagger)^2)} \cdot u_i(k, *)
		\\
		\leq &
		\sum_k \sum_i B_i \cdot  \frac{u_i(\dagger) - u_i(k, \dagger)}{(u_i(\dagger)^2)} \cdot u_i(k, \dagger)
		\\
		\leq &
		\sum_i B_i \sum_k \frac{u_i(k, \dagger)}{u_i(\dagger)}
		= B.
	\end{align*}
	For the second part, note that at either equilibrium, buyer's bang-per-bucks are equalized among sellers: $
	\frac{u_i(k, \dagger)}{u_i(\dagger)} = \frac{B_i(k, \dagger)}{B_i},
	\frac{u_i(k, *)}{u_i(*)} = \frac{B_i(k, *)}{B_i}$.
	Combined with inequality (\ref{eqn:delta}), we have
	\begin{align*}
		\sum_i \sum_k B_i \cdot \frac{u_i(k, *)}{u_i(\dagger)} \cdot  \frac{u_i(k, \dagger)}{u_i(\dagger)} 
		= 
		\sum_k \sum_i  \frac{u_i(*)}{B_i u_i(\dagger)} \cdot \sum_k B_i(k, \dagger) B_i(k, *)
		\leq
		\Delta \sum_i B_i \cdot \frac{u_i(*)}{u_i(\dagger)}.
	\end{align*}
	
	Putting both parts together and rearranging the terms gives
	\begin{displaymath}
		(1 - \Delta) \sum_i B_i \cdot \frac{u_i(*)}{u_i(\dagger)} \leq B.
	\end{displaymath}
	
	By the weighted AM-GM inequality, the ratio of NSW at each equilibrium is bounded by:
	\begin{align*}
		\prod_i \bigparen{\frac{u_i(*)}{u_i(\dagger)}}^{\frac{B_i}{B}}
		& \leq
		\frac{1}{B} \cdot \sum_i B_i\cdot \frac{u_i(*)}{u_i(\dagger)}
		\leq
		\frac{1}{1 - \Delta}.
	\end{align*}
\end{proof}

The approximation ratio improves as $\Delta$ becomes smaller.
Intuitively, $\Delta$ describes how \textit{monopolized} the market is.
For instance, if $\Delta = 0.9$, then at the competitive equilibrium some buyer $i$ will spend 90\% of its budget on a single seller $j$, and other sellers may not have enough power to compete with seller $j$ for buyer $i$.
As a result, seller $j$ may raise the price for buyer $i$ and spare its items for other buyers' budgets, which hurts the utility of buyer $i$ and the overall fairness.
Conversely, with a small $\Delta$, no seller earns a significant portion of buyers' budgets and the market is more balanced and competitive.
In this case, even if sellers deviate, the end result will remain close to the competitive equilibrium.

\section{Conclusion}
This paper extends proportional dynamics to auto-bidding auction markets, a first generalization to the one-seller multi-item scenario.
While maintaining a consistent convergence to the competitive equilibrium, we find that buyers adhere more to the proportional rule, while sellers can easily deviate for more revenue. This leads to a novel seller-side game with a unique pure Nash equilibrium, which still ensures good fairness in competitive markets. We hope that our work could help people understand the strategic behaviors in online advertising markets, and provide valuable insights into modern price-forming dynamics.


%


\bibliographystyle{abbrvnat}
\bibliography{bibliography}

\clearpage
\newpage

\appendix

\section{An Example of Non-smoothness and Non-convexity of the Buyer-side Optimization Problem}
\label{app:example_buyer_optimization}

\begin{example} \label{example:non_convexity_buyer_optimization}
	Consider a sub-market with two items and two buyers where $v_{1, 1} = v_{2, 2} = 2$ and $v_{1, 2} = v_{2, 1} = 1$.
	Suppose buyer 1's budget is 2.
	Let $f(b)$ be the value received by buyer 2 at the pacing equilibrium with budget $b$. Then
	\begin{displaymath}
		f(b) = \left\{
		\begin{array}{ll}
			\frac{6b}{b + 2}, & \text{if } b \in [0, 1]; \\
			2, & \text{if } b \in [1, 4]; \\
			3 - \frac{6}{b + 2}, & \text{if } b \in [4, +\infty).
		\end{array}
		\right.
	\end{displaymath}
	Suppose at the other sub-market the valuation is identical and buyer 1 also puts a budget of 2.
	Then the utility of buyer 2 in the budget allocation game will be $f(b) + f(B_2 - b)$, where $b$ is the amount of money it pays for one of the sub-markets.
\end{example}

\section{Incentive of Buyers in the One-seller One-item Setting}
\label{app:incentive_buyers_original}

When each seller owns only a single item, fixing other buyers' bid for each seller/item, the optimization problem of buyer $i$ is given by the following convex program:
\begin{align*}
	\max \quad &
	\sum_k \frac{B_i(k)}{B_i(k) + \sum_{i' \neq i} B_{i'}(k)} \cdot  v_{i, k} \\
	\text{s.t.} \quad &
	\sum_k B_i(k) \leq B_i \\
	& B_i(k) \geq 0, \forall k.
\end{align*}
The optimal solution is generally \textit{not} the proportional rule, and the only information required to solve the program is its own valuation of items and the fraction of each item bought at the last time step (from which $ \sum_{i' \neq i} B_{i'}(k)$ can be deduced easily).
Since online convex optimization is well-studied in the literature and widely implemented in real-world applications, buyers can easily deviate from the proportional rule in this case.

\section{Pacing Equilibrium with Additive Boosts}
\label{app:pacing_equilibrium_with_boosts}

\begin{lemma} \label{thm:pacing_equilibrium_with_boosts}
	The pacing equilibrium with additive boosts is a fixed point of proportional dynamics with additive boosts, which is given by the solution of the following modified EG program:
	\begin{align*}
		\max \quad & \sum_i B_i \ln \bigparen{ \sum_j v_{i, j} x_{i, j} } + \sum_{i, j} c_{i, j} x_{i, j}
		\\
		\text{s.t.} \quad &
		\sum_i x_{i, j} \leq 1, \forall j;
		\\
		& x_{i, j} \geq 0, \forall i, j.
	\end{align*}
\end{lemma}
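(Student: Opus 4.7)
The plan is to derive the KKT conditions of the modified EG program and verify that, under the natural identification of Lagrange multipliers with prices and pacing multipliers, they coincide with the four defining conditions of the pacing equilibrium with additive boosts. Once this characterization is in hand, the fixed-point claim follows by observing that the equilibrium pacing multipliers equalize, for each buyer, the bang-per-buck across sellers, so the buyer-side proportional rule reproduces the current budget allocation.

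First I would form the Lagrangian by introducing a nonnegative multiplier $p_j$ for each supply constraint $\sum_i x_{i,j} \leq 1$ and $\mu_{i,j} \geq 0$ for each nonnegativity constraint $x_{i,j} \geq 0$. Setting the partial derivative with respect to $x_{i,j}$ to zero gives $\frac{B_i v_{i,j}}{\sum_{j'} v_{i,j'} x_{i,j'}} + c_{i,j} = p_j - \mu_{i,j}$. Writing $u_i = \sum_{j'} v_{i,j'} x_{i,j'}$ and $\alpha_i = B_i / u_i$, and using complementary slackness on $\mu_{i,j}$, this becomes $\alpha_i v_{i,j} + c_{i,j} \leq p_j$ with equality whenever $x_{i,j} > 0$, which is exactly conditions (1) and (2) of the pacing equilibrium with boosts. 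Complementary slackness on the supply constraint yields condition (4), $\sum_i x_{i,j} = 1$ whenever $p_j > 0$ (items with $p_j = 0$ can be allocated arbitrarily without affecting any utility or the EG objective). Condition (3) then follows automatically: $\sum_j (p_j - c_{i,j}) x_{i,j} = \sum_j \alpha_i v_{i,j} x_{i,j} = \alpha_i u_i = B_i$. Since the EG program is concave on a compact convex feasible region and Slater's condition holds, KKT is necessary and sufficient, so its solutions are precisely the pacing equilibria with boosts.

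For the fixed-point part, I would take any such equilibrium $(\alpha, x, p)$ over the whole market and set $B_i(k) := \sum_{j \in J_k}(p_j - c_{i,j}) x_{i,j} = \alpha_i u_i(k)$, where $u_i(k) = \sum_{j \in J_k} v_{i,j} x_{i,j}$. Summing over $k$ gives $B_i = \alpha_i u_i(K)$, hence $B_i(k)/B_i = u_i(k)/u_i(K)$, which is exactly the proportional update rule, so the buyer-side budget allocation is stationary. Restricting the four global equilibrium conditions to items in $J_k$ shows that $(\alpha_i, \{x_{i,j}\}_{j \in J_k})$ is itself a local pacing equilibrium with boosts in sub-market $k$ given the budgets $\{B_i(k)\}_i$, so the per-sub-market allocation is also stationary. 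The argument is essentially a KKT exercise plus a one-line check that global pacing multipliers equal the local ones, and I expect the main obstacles to be merely the standard degeneracies---items with $p_j = 0$ and buyer-seller pairs with $u_i(k) = 0$ where the local bang-per-buck is undefined---both of which can be dispatched by using the global $\alpha_i$ and noting that the proportional rule correctly reproduces $B_i(k) = 0$ whenever $u_i(k) = 0$.
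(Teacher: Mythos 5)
Your proposal is correct and follows essentially the same route as the paper: the paper's proof is exactly this KKT computation on the modified EG program (stationarity giving $\alpha_i v_{i,j}+c_{i,j}\le p_j$ with equality on the support, complementary slackness giving market clearing, and the resulting budget depletion $\sum_j(p_j-c_{i,j})x_{i,j}=B_i$ with equalized bang-per-bucks), only stated more tersely. Your additional explicit verification that equalized bang-per-bucks make the proportional update stationary is a detail the paper leaves implicit, but it is the same argument.
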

\begin{proof}
	Let  $p_j$ be the dual multiplier associated with the capacity constraint of item $j$, and $u_i = \sum_j v_{i, j} x_{i, j}$ be the total value acquired by buyer $i$.
	By KKT conditions, the optimal solution satisfies
	\begin{displaymath}
		- \frac{B_i v_{i, j}}{u_i} - c_{i, j} + p_j \geq 0.
	\end{displaymath}
	It can be verified that each buyer spends $\sum_j (p_j - c_{i, j}) x_{i, j} = B_i$ in total and its bang-per-buck $\frac{p_j - c_{i, j}}{v_{i, j}}$ is equalized across items.
\end{proof}

\begin{corollary}
	By strict convexity of the objective w.r.t. $u_i$, buyer's utility at equilibrium is unique.
\end{corollary}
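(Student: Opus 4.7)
The plan is to reduce the corollary to a strict-concavity argument on the modified Eisenberg--Gale program in Lemma~\ref{thm:pacing_equilibrium_with_boosts}. That lemma identifies every pacing equilibrium with additive boosts as an optimizer of the program, so it suffices to prove that all optimizers $x$ of the program induce the same utility vector $u = (u_1, \ldots, u_n)$, where $u_i := \sum_j v_{i,j} x_{i,j}$. Once this is established, the equilibrium utility of each buyer, being read off from $u_i$, is unique.

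First, I would decompose the objective into two parts: $\Phi(u) := \sum_i B_i \ln u_i$, which depends on $x$ only through the utility vector $u$, and $L(x) := \sum_{i,j} c_{i,j} x_{i,j}$, which is linear in $x$. Because $\ln$ is strictly concave and $B_i > 0$, $\Phi$ is strictly concave in $u$; the feasible set for $x$ (the product of item-capacity simplices) is convex and compact, and $u$ is a linear image of $x$.

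Next, I would run the standard midpoint argument. Suppose, for contradiction, that two optimizers $x^{(1)}, x^{(2)}$ yield utility vectors $u^{(1)} \neq u^{(2)}$. The midpoint $\bar{x} := (x^{(1)} + x^{(2)})/2$ is feasible with utility $\bar{u} = (u^{(1)} + u^{(2)})/2$. Linearity gives $L(\bar{x}) = \tfrac{1}{2}(L(x^{(1)}) + L(x^{(2)}))$, while strict concavity of $\Phi$ together with $u^{(1)} \neq u^{(2)}$ gives $\Phi(\bar{u}) > \tfrac{1}{2}(\Phi(u^{(1)}) + \Phi(u^{(2)}))$. Summing the two, the objective at $\bar{x}$ strictly exceeds the common optimal value, contradicting optimality of $x^{(1)}$ and $x^{(2)}$. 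Hence $u$ is invariant across optimizers, and by Lemma~\ref{thm:pacing_equilibrium_with_boosts} across pacing equilibria with additive boosts.

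There is essentially no hard step, but the one subtlety worth flagging is the distinction between strict concavity in $u$ (which holds and is all we need) and strict concavity in $x$ (which fails, since $u$ is only a linear function of $x$, so many different $x$-profiles can realize the same $u$ and the allocation itself need not be unique). Degenerate buyers with $B_i = 0$ contribute nothing to $\Phi$ and can be dropped without affecting the argument, so strict concavity in the surviving coordinates suffices.
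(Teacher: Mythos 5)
Your proposal is correct and follows essentially the same route as the paper, which offers no detailed argument beyond the one-line appeal to strict concavity (the paper says ``convexity,'' but means strict concavity of $\sum_i B_i \ln u_i$ in the maximization objective). Your midpoint argument, separating the strictly concave utility term from the linear boost term and using that $u$ is a linear image of the convex feasible set, is exactly the standard rigorization of that appeal.
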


\section{References for Lemma \ref{lemma:monotonicity_budget}}
\label{app:budget_monotonicity}

The first half of the lemma is a restatement of a result in \cite{chen2011profitable}, which states that truthfully reporting the budget is a dominant strategy for fixed linear utility functions.

The second half of the lemma is a corollary of Lemma 1 in \cite{conitzer2022pacing}, which states that the pacing multiplier (inverse of bang-per-bucks) profile at the pacing equilibrium buyer-wise dominates any budget-feasible multiplier profile.
If some buyer raises its budget, the old pacing equilibrium is still budget-feasible with the new budget profile, and it is thus buyer-wise dominated by the new pacing equilibrium w.r.t. the inverse of bang-per-bucks.

\section{Proof of Theorem \ref{thm:convergence_with_boosts}}
\label{app:convergence_with_boosts}

\begin{proof}
	Following the definition of $\Phi$ in the proof of Theorem \ref{thm:convergence_non_personalized_pd},
	we still have
	\begin{align*}
		\Phi(t) - \Phi(t + 1)
		=
		\sum_{i, k} B_i(k, *) \ln \frac{u_i(*)}{u_i(t)}
		+
		\sum_{i, k}  B_i(k, *) \ln \bigparen{\frac{B_i(k, *)}{B_i(k, t) \cdot \frac{u_i(k, *)}{u_i(k, t)}}}.
	\end{align*}
	
	The first term is now lower bounded by the modified EG-objective (Lemma \ref{thm:pacing_equilibrium_with_boosts}):
	\begin{displaymath}
		\sum_{i, k} B_i(k, *) \ln \frac{u_i(*)}{u_i(t)}
		\geq
		\sum_{i, j} c_{i, j} \bigparen{x_{i, j}(t) - x_{i, j}(*)}
		:= C.
	\end{displaymath}
	Note that $C$ depends on $x(t)$ and is not a constant.
	
	For the second term, again let $\alpha_i(k, t) = \frac{B_i(k, t)}{u_i(k, t)}$ and we have
	\begin{align*}
		\tilde{B}
		= &
		\sum_{k, i} \alpha_i(k, t) \sum_{j \in J_k} v_{i, j} x_{i, j}(*) \\
		\leq &
		\sum_{k, i} \sum_{j \in J_k} \bigparen{ \alpha_i(k, t) v_{i, j} + c_{i, j} } x_{i, j}(t)
		- \sum_{k, i} \sum_{j \in J_k} c_{i, j} x_{i, j}(*) \\
		= & B + C,
	\end{align*}
	where $B$ stands for the total budget of all buyers, and the inequality holds since, at the pacing equilibrium with additive boosts, every item is sold to the buyer with the maximum boosted bid $\alpha_i v_{i, j} + c_{i, j}$.
	
	Consider the function
	\begin{displaymath}
		f(C) := C + \sum_{i, k} B_i(k, *) \ln \bigparen{ \frac{B_i(k, *)}{\frac{B + C}{\tilde{B}} \cdot B_i(k, t) \cdot \frac{u_i(k, *)}{u_i(k, t)}} }.
	\end{displaymath}
	Take the gradient:
	\begin{displaymath}
		\fdiff{f}{C} = \frac{C}{B + C}.
	\end{displaymath}
	Since $B + C > 0$, $f$ is minimized at $C = 0$.
	The rest part is identical to the proof of Theorem \ref{thm:convergence_non_personalized_pd}.
\end{proof}

\section{Definition of Seller Competition Game}
\label{app:wlog_seller_competition_game}

The assumption $ \forall i, k$, $\sum_{j \in J \setminus J_k} v_{i, j} x_{i, j} \geq \epsilon$ basically requires that there are at least two different sellers competing for each buyer. It is without loss of generality since as  $\sum_{j \in J \setminus J_k} v_{i, j} x_{i, j}$ goes to zero, seller $k$ tends to allocate a smaller fraction of its items to buyer $i$, which is enough to absorb almost all the buyer's budget as the competition is nearly non-existent.
This means $u_i(k)$ also goes to zero.
However, the marginal gain to join the competition for buyer $i$ becomes larger and larger for other sellers and eventually someone will do so.
Therefore such states cannot form a Nash equilibrium.

\section{Proof of Theorem \ref{thm:unique_seller_competition_game}}
\label{app:unique_seller_competition_game}

\begin{proof}
	Let $U$ be the set of all implementable buyer utility profiles $\{u_i(k)\}_{i, k}$ such that $\sum_{k \neq k} u_i(k') \geq \epsilon$ for every $i$ and $k$.
	$U$ is compact and convex.
	Define function $f: U \times U \rightarrow \realnum$ as:
	\begin{displaymath}
		f(u, w) = \sum_{i, k} B_i \cdot \frac{u_i(k)}{u_i(k) + \sum_{k' \neq k} w_i(k')}, \forall u, w \in U.
	\end{displaymath}
	$f$ is strictly concave w.r.t. $u$ (since $\sum_{k \neq k} u_i(k') \geq \epsilon$) and convex w.r.t. $w$.
	Let $B = \sum_i B_i$, and we have
	\begin{align*}
		\max_{u \in U}  f(u, w) \geq f(w, w) = B
		\quad
		\Rightarrow 
		\quad
		\min_{w \in U} \max_{u \in U}  f(u, w) \geq B;
		\\
		\min_{w \in U}  f(u, w) \leq f(u, u) = B
		\quad
		\Rightarrow
		\quad
		\max_{u \in U} \min_{w \in U}  f(u, w) \leq B.
	\end{align*}
	By minimax theorem, $
	\max_{u \in U} \min_{w \in U} f(u, w)
	=
	\min_{w \in U} \max_{u \in U}  f(u, w) = B,
	$
	and there exists $(u^*, w^*)$ that satisfies $f(u^*, w^*) = B$ and
	\begin{displaymath}
		u^* \in \argmax_{u \in U} f(u, w^*), \quad
		w^* \in \argmin_{w \in U} f(u^*, w).
	\end{displaymath}
	By strict concavity of $f(u, w)$ w.r.t. $u$, $u^* = w^*$. 
	Then $u^*$ is a Nash equilibrium since $u^* \in \argmax_{u \in U} f(u, u^*)$.
	
	Suppose $u' \neq u^*$ is also a pure Nash equilibrium.
	By strict concavity of $f$ w.r.t. $u$, $f(u^*, u') < f(u', u') = B$, this contradicts the fact that $u^* \in \argmin_{u \in U} f(u^*, u)$.
	Therefore $u^*$ is the unique pure Nash equilibrium.
\end{proof}





\end{document}